\makeatletter \@addtoreset{equation}{section}
\newtheorem{thm}{Theorem}[section]
\newtheorem{cor}{Corollary}[section]
\newtheorem{lem}{Lemma}[section]
\newtheorem{exmp}{Example}[section]
\theoremstyle{definition}
\newtheorem{rem}{Remark}[section]
\begin{document}

\title{The Smallest Eigenvalue of Large Hankel Matrices Generated by a Deformed Laguerre Weight}

\author[1]{{Mengkun Zhu}\footnote{Zhu\_mengkun@163.com}}
\author[2]{{Niall Emmart}\footnote{nemmart@yrrid.com}}
\author[1]{{Yang Chen}\footnote{yangbrookchen@yahoo.co.uk}}
\author[2]{{Charles Weems}\footnote{weems@cs.umass.edu}}

\affil[1]{Department of Mathematics, University of Macau,
Avenida da Universidade, Taipa, Macau, China}

\affil[2]{College of Information and Computer Sciences, \protect\\
University of Massachusetts, Amherst, MA 01003, USA}

\renewcommand\Authands{ and }

\maketitle

\begin{abstract}

We study the asymptotic behavior of the smallest eigenvalue, $\lambda_{N}$, of the Hankel (or moments) matrix denoted by $\mathcal{H}_{N}=\left(\mu_{m+n}\right)_{0\leq m,n\leq N}$, with respect to the weight $w(x)=x^{\alpha}{\rm e}^{-x^{\beta}},~x\in[0,\infty),~\alpha>-1,~\beta>\frac{1}{2}$. Based on the research by Szeg\"{o}, Chen, etc., we obtain an asymptotic expression of the orthonormal polynomials $\mathcal{P}_{N}(z)$ as $N\rightarrow\infty$, associated with $w(x)$. Using this, we obtain the specific asymptotic formulas of  $\lambda_{N}$ in this paper.

Applying the parallel algorithm discovered by Emmart, Chen and Weems, we get a variety of numerical results of $\lambda_{N}$ corresponding to our theoretical calculations.
\end{abstract}

\section{Introduction}
Random matrix theory (RMT) originated in multivariate statistics in the work of Hsu, Wishart and others in the 1930s (see the monograph \cite{C26}). In 1950s, Wigner put forward similar models for the regularity observed in the energy level distribution of heavy nuclei, where the energy levels are the eigenvalues of large random matrices. From the 1960s to 1970s, through the fundamental work of Dyson, Mehta, Gaudin, des Cloizeaux, Widom, Tracy, Wilf and others, RMT developed into a branch of Mathematical Physics. Its rapid development from the 1990s is due a string of fundamental discoveries of Tracy and Widom on the probability laws governing the largest and smallest eigenvalues of two families of Hermitian random matrices, the Gaussian Unitary Ensembles (GUE) and the Laguerre Unitary Ensembles (LUE).

RMT plays an important role in many diverse fields, multivariate statistics, quantum physics, Multi-Input-Multi-Output (MIMO) wireless communication, and stock movements in financial markets, etc. For a variety of theories and applications of RMT, see \cite{new1,new7,new8,C17,new4,new5,C21,new9,new6} and related references therein. RMT considers the properties, e.g. determinants, eigenvalues, eigenvalue distributions, eigenvectors, spectra, inverse, etc., of matrices whose elements are random variables chosen from a given distribution.

The analysis of Hankel matrices, occurs naturally in moment problems, which plays an important role in RMT. On moment problems, please see the monographs by Akhiezer \cite{C16} and by Krein \cite{new11}. The study of the largest and smallest eigenvalues are important since they provide useful information about the nature of the Hankel matrix generated by a given weight function, e.g. they are related with the inversion of Hankel matrices, where the condition numbers are enormously large.

Given $\left\{\mu_{k}\right\}$ the moment sequence of a weight function $w(x)(>0)$ with infinite support $s$,
\begin{equation}\label{b2}
\mu_{k}:=\int_{s} x^{k}w(x)dx,~~k=0,1,2,\ldots,
\end{equation}
the Hankel matrices, it is known that
\begin{equation}\label{b1}
\mathcal{H}_{N}:=\left(\mu_{m+n}\right)_{m,n=0}^{N},~~ N=0,1,2,\ldots
\end{equation}
are positive definite, see \cite{zzz}.

Let $\lambda_{N}$ denote the smallest eigenvalue of $\mathcal{H}_{N}$. The asymptotic behavior of $\lambda_{N}$ for large $N$ has been investigated in \cite{C3,C5,C6,new33,C9,C10,C11,C13,C14,C22}. Also see \cite{C24,C25}, in which the authors have studied the behavior of the condition number $\kappa\left(\mathcal{H}_{N}\right):=\frac{\Lambda_{N}}{\lambda_{N}}$, where $\Lambda_{N}$ denotes the largest eigenvalue of $\mathcal{H}_{N}$.

Szeg\"{o} \cite{C3} studied the asymptotic behavior of $\lambda_{N}$ for the Hermite (or Gaussion) weight ($w(x)={\rm e}^{-x^{2}},x\in\mathbb{R}$) and the Laguerre weight ($w(x)={\rm e}^{-x},x\geq0$). He found\footnote[1]{ In all of this paper, $a_{N}\simeq b_{N}$ means $\lim_{N\rightarrow\infty}a_{N}/b_{N}$=1.}
\begin{equation*}
\lambda_{N}\simeq A N^{\frac{1}{4}}B^{\sqrt{N}},
\end{equation*}
where $A,B$ are certain constants, satisfying $0<A,~0<B<1$. Also, Szeg\"{o} \cite{C3} showed that the largest eigenvalue $\Lambda_{N}$ corresponding to the Hankel matrices $\left[\frac{1}{i+j+1}\right]_{i,j=0}^{N}$, $\left[\Gamma\left(\frac{i+j+1}{2}\right)\right]_{i,j=0}^{N}$ and $\left[\Gamma(i+j+1)\right]_{i,j=0}^{N}$ were approximated by  $\frac{\pi}{2}$, $\Gamma\left(N+\frac{1}{2}\right)$ and $(2N)!$ respectively.

In \cite{C6}, Widom and Wilf investigated the case where $w(x)$ is supported in a compact interval $[a,b]$, such that the Szeg\"{o} condition
\begin{equation}
\int_{a}^{b}\frac{\ln w(x)}{\sqrt{(b-x)(x-a)}}dx>-\infty,
\end{equation}
holds, then they obtained
\begin{equation*}
\lambda_{N}\simeq A \sqrt{N}B^{N}.
\end{equation*}

Chen and Lawrence \cite{C9} found the asymptotic behavior of $\lambda_{N}$ with the weight function $w(x)={\rm e}^{-x^{\beta}},~x\in[0,\infty),~\beta>\frac{1}{2}$. Berg, Chen and Ismail \cite{C13} proved that the moment sequence (\ref{b2}) is determinate iff $\lambda_{N}\rightarrow0$ as $N\rightarrow\infty$. This is a new criteria for the determinacy of the Hamburger moment problem. Also, in the same paper, they obtained a lower bound of $\lambda_{N}$ for large $N$. In \cite{C10}, Chen and Lubinsky obtained the behavior of $\lambda_{N}$ when $w(x)={\rm e}^{-|x|^{\alpha}},~x\in\mathbb{R},~\alpha>1$. Berg and Szwarc \cite{C14} proved that $\lambda_{N}$ has exponential decay to zero for any measure which with compact support.

Zhu, Chen, Emmart and Weems \cite{C22} studied the Jacobi case, i.e. $w(x)=x^{\alpha}(1-x)^{\beta},~x\in[0,1],~\alpha>-1,~\beta>-1$ and provided a asymptotic behavior of $\lambda_{N}$,
\begin{equation*}
\lambda_{N}\simeq2^{\frac{15}{4}}\pi^{\frac{3}{2}}\left(1+2^{\frac{1}{2}}\right)^{-2\alpha}\left(1+2^{-\frac{1}{2}}\right)^{-2\beta}N^{\frac{1}{2}}\left(1+2^{\frac{1}{2}}\right)^{-4(N+1)},
\end{equation*}
which reduces to Sezg\"{o}'s result \cite{C3}, if $\alpha=\beta=0$.

The examples above show that the values of $\lambda_{N},N\rightarrow\infty$ are exponentially small, and the asymptotic behavior of $\lambda_{N}$ depends on the $w(x)$ in a non-trivial way. We are motivated by this phenomenon and the purpose of this paper is again to study the asymptotic behavior of $\lambda_{N}$, here we choose an generalised Laguerre weight $w(x)=x^{\alpha}{\rm e}^{-x^{\beta}},~x\in[0,\infty),~\alpha>-1,~\beta>\frac{1}{2}$.

The remainder of this paper is organized in 5 sections. In section 2 we reproduce some known results (Refs. \cite{C3,C9,C10,C13}, etc.) that will be applied to find the estimation of $\lambda_{N}$. In section 3, by adopting a previous result \cite{C20}, we obtain the asymptotic formula for the polynomials orthonormal with respect to $w(x)=x^{\alpha}{\rm e}^{-x^{\beta}},~x\in[0,\infty),~\alpha>-1,~\beta>\frac{1}{2}$, which is then employed in sections 4 and 5 for the determination of the large $N$ behavior of $\lambda_{N}$.
And finally, in section 6, we present a comparison of the theoretical results to numeric calculations for the smallest eigenvalue, for various values of $\alpha$, $\beta$ and $N$. The numerical computations were performed using the parallel algorithms developed in \cite{C11}.

\section{Preliminaries}
Consider the weight
\begin{equation*}
w(x):=x^{\alpha}{\rm e}^{-x^{\beta}},~x\in[0,\infty),~\alpha>-1,~\beta>\frac{1}{2},
\end{equation*}
in this case, the moments are
\begin{equation*}
\mu_{n}:=\int_{0}^{\infty}x^{n}w(x)dx=\frac{1}{\beta}\Gamma\left(\frac{1+\alpha+n}{\beta}\right),
\end{equation*}
and the positive Hankel matrix is
\begin{equation*}
\mathcal{H}_{N}:=\left(\mu_{m+n}\right)_{m,n=0}^{N}.
\end{equation*}
The focus of this paper is to derive the asymptotic behavior of the smallest eigenvalue $\lambda_{N}$ of $\mathcal{H}_{N}$.

It is well known that the smallest eigenvalue $\lambda_{N}$ can be found using the classical Rayleigh quotient
\begin{equation}\label{e1}
\lambda_{N}=\min\left\{\frac{\sum_{m,n=0}^{N}\overline{x}_{m}\mu_{m+n}x_{n}}{\sum_{n=0}^{N}|x_{n}|^{2}}~\Bigg{|}~X:=\left(x_{0},x_{1},\ldots,x_{N}\right)^{T}\in\mathbb{C}^{N+1}
\setminus\{0\}\right\}.
\end{equation}
Let $P_{N}$ be the orthogonal polynomials associated with $w(x)$, and denote by
\begin{equation*}
P_{N}(z):=\sum_{n=0}^{N}x_{n}z^{n},
\end{equation*}
then
\begin{equation}\label{0.00}
\int_{0}^{\infty}|P_{N}(x)|^{2}w(x)dx=\sum_{m,n=0}^{N}\overline{x}_{m}\mu_{m+n}x_{n},
\end{equation}

If we denote the orthonomal polynomials associated with the weight $w(x)$ by $\mathcal{P}_{N}(x)$, through
\begin{equation*}
P_{N}(z)=\sqrt{h_{N}}\mathcal{P}_{N}(z),
\end{equation*}
where $h_{N}$ is the square of the $L^{2}$ norm of $P_{N}(z)$, such that
\begin{equation}\label{200}
\int_{0}^{\infty}|\mathcal{P}_{N}(x)|^{2}w(x)dx=1,
\end{equation}
then the expression for $\lambda_{N}$, (\ref{e1}), can be recast as
\begin{equation}\label{e2}
\lambda_{N}=\min\left\{\frac{2\pi}{\int_{-\pi}^{\pi} \left|\mathcal{P}_{N}({\rm e}^{{\rm i}\theta})\right|^{2}d\theta} \right\}.
\end{equation}

If we define
\begin{equation*}
P_{N}(z):=\sum_{n=0}^{N}\xi_{n}\mathcal{P}_{n}(z),~~~{\rm and}~~~\mathcal{K}_{mn}:=\int_{-\pi}^{\pi}\mathcal{P}_{m}\left({\rm e}^{{\rm i}\theta}\right)\mathcal{P}_{n}\left({\rm e}^{-{\rm i}\theta}\right)d\theta,
\end{equation*}
we can see that
\begin{equation*}
\int_{-\pi}^{\pi}\left|P_{N}\left({\rm e}^{{\rm i}\theta}\right)\right|^{2}d\theta=\sum_{m,n=0}^{N}\overline{\xi}_{m}\mathcal{K}_{mn}\xi_{n},
\end{equation*}
Hence, the formula (\ref{e2}) will be equivalent to
\begin{equation}\label{e3}
\lambda_{N}=\min\left\{\frac{2\pi}{\sum_{m,n=0}^{N}\overline{\xi}_{m}\mathcal{K}_{mn}\xi_{n}}:\sum_{n=0}^{N}\left|\xi_{n}\right|^{2}=1  \right\}.
\end{equation}
Based on the Cauchy-Schwarz inequality, we will find that

\begin{equation*}
\begin{split}
\sum_{m,n=0}^{N}\overline{\xi}_{m}\mathcal{K}_{mn}\xi_{n}\leq\sum_{m,n=0}^{N}\mathcal{K}_{mm}^{\frac{1}{2}}\mathcal{K}_{nn}^{\frac{1}{2}}\left|\xi_{m}\right|\left|
\xi_{n}\right|\leq\sum_{m=0}^{N}\mathcal{K}_{mm}\cdot\sum_{n=0}^{N}\left|\xi_{n}\right|^{2}
\leq\sum_{n=0}^{N}\mathcal{K}_{nn}.
\end{split}
\end{equation*}
Therefore, a lower bound for the smallest eigenvalue of $\lambda_{N}$ is given by
\begin{equation}\label{e4}
\lambda_{N}\geq\frac{2\pi}{\sum_{n=0}^{N}\mathcal{K}_{nn}}.
\end{equation}

\section{The orthonomal polynomials with respect to the weight $w(x)=x^{\alpha}{\rm e}^{-x^{\beta}}$.}

The purpose of this section is to find the asymptotics of the orthonomal polynomials $\{\mathcal{P}_{N}(z)\}$ with respect to the weight $w(x)=x^{\alpha}{\rm e}^{-x^{\beta}},~x\in[0,\infty),~\alpha>-1,~\beta>\frac{1}{2}$.
Based on the Coulomb fluid linear statistics method, it has been proved in \cite{C20}, for $N\rightarrow\infty$, that the monic orthogonal polynomials $P_{N}(z)$ associated with $w(x)={\rm e}^{-v(x)}$ can be approximated by
\begin{equation}\label{pn}
P_{N}(z)\simeq\exp\left[-S_{1}(z)-S_{2}(z)\right],
\end{equation}
where
\begin{equation*}
S_{1}(z)=\frac{1}{4}\ln\left[\frac{16(z-a)(z-b)}{(b-a)^{2}}\left(\frac{\sqrt{z-a}-\sqrt{z-b}}{\sqrt{z-a}+\sqrt{z-b}}\right)^{2}\right],~\text{$z\notin [a,b]$},
\end{equation*}

\begin{equation*}
\begin{split}
S_{2}(z)  &=-N\ln\left(\frac{\sqrt{z-a}+\sqrt{z-b}}{2}\right)^{2}\\
         &  +\frac{1}{2\pi}\int_{a}^{b}\frac{v(x)}{\sqrt{(b-x)(x-a)}}\left[\frac{\sqrt{(z-a)(z-b)}}{x-z}+1\right]dx,~\text{$z\notin [a,b]$}.
\end{split}
\end{equation*}
Chen and his co-authors\cite{Chenchen} also gave an equivalent representation for $S_{1}$:
\begin{equation*}
{\rm e}^{-S_{1}(z)}=\frac{1}{2}\left[\left(\frac{z-b}{z-a}\right)^{\frac{1}{4}}+\left(\frac{z-a}{z-b}\right)^{\frac{1}{4}}\right],~\text{$z\notin [a,b]$}.
\end{equation*}
Consequently, we have,
\begin{thm}\label{1111}
For $N\rightarrow\infty$, the orthonomal polynomials associated with the weight $w(x)=x^{\alpha}{\rm e}^{-x^{\beta}},~x\in[0,\infty),~\alpha>-1,~\beta>\frac{1}{2}$ are approximated by
\begin{equation*}
\mathcal{P}_{N}(z)\simeq(-z)^{-\frac{\alpha}{2}}\frac{(-1)^{N}}{\sqrt{2\pi b}}\frac{\exp\left[-\mathcal{I}(z)+(2N+1+\alpha)\log\left(\sqrt{\eta}+\sqrt{\eta+1}\right)\right]}{\left[\eta(1+\eta)\right]^{\frac{1}{4}}},
\end{equation*}
with
\begin{equation}\label{0.01}
\begin{split}
\mathcal{I}(z):&=-\frac{2N+\alpha}{2\beta-1}\sqrt{\eta(1+\eta)}\cdot{_{2}F_{1}}\left(1,1-\beta;\frac{3}{2}-\beta;-\eta\right)-\frac{(-z)^{\beta}}{2}\cdot\sec(\pi\beta)\\
&=-\frac{2N+\alpha}{2\beta}\sqrt{\frac{\eta}{1+\eta}}
\cdot{_{2}F_{1}}\left(1,\frac{1}{2};1+\beta;\frac{1}{1+\eta}\right),\\
\end{split}
\end{equation}
where $z\notin[0,b]$ and $\eta:=-\frac{z}{b}$, whilst
\begin{equation*}
b:=C(2N+\alpha)^{\frac{1}{\beta}} ~~~{ and} ~~~C=C(\beta):=4\left[\frac{\Gamma(\beta+1)\Gamma(\beta)}{\Gamma(2\beta+1)}\right]^{\frac{1}{\beta}}.
\end{equation*}
\end{thm}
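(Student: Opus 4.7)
The plan is to specialise the Coulomb-fluid approximation (\ref{pn}) from \cite{C20} to our weight $w(x)=x^{\alpha}{\rm e}^{-x^{\beta}}$, for which $v(x)=x^{\beta}-\alpha\log x$. The support of the associated equilibrium measure is $[a,b]=[0,b]$ with $b$ to be determined in due course; the endpoint $a=0$ is forced because $w$ is supported on $[0,\infty)$ and the leading-order confining potential $x^{\beta}$ grows only at infinity. The proof then amounts to (i) writing $S_{1}$ and $S_{2}$ with $a=0$, (ii) splitting $S_{2}$ according to the two pieces $x^{\beta}$ and $-\alpha\log x$ of $v(x)$, (iii) evaluating each piece in closed form, (iv) pinning $b$ down from the standard endpoint equilibrium equation, and (v) passing from the monic $P_{N}$ to the orthonormal $\mathcal{P}_{N}=P_{N}/\sqrt{h_{N}}$ using the known large-$N$ asymptotics of $h_{N}$.

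With $\eta=-z/b$ and $a=0$ one has $(z-b)/(z-a)=(1+\eta)/\eta$, so the Chen--Chen closed form for $S_{1}$ collapses to
\begin{equation*}
{\rm e}^{-S_{1}(z)}=\frac{\sqrt{\eta}+\sqrt{1+\eta}}{2\,[\eta(1+\eta)]^{1/4}},
\end{equation*}
which already produces the $[\eta(1+\eta)]^{-1/4}$ denominator and one factor of $\sqrt{\eta}+\sqrt{1+\eta}$ in the theorem. The outer logarithmic piece of $S_{2}$, namely $-N\log[(\sqrt{z}+\sqrt{z-b})/2]^{2}$, is rewritten via $\sqrt{z}=\sqrt{-b}\,\sqrt{\eta}$ and $\sqrt{z-b}=\sqrt{-b}\,\sqrt{1+\eta}$, producing $2N\log(\sqrt{\eta}+\sqrt{1+\eta})$ together with a $z$-independent constant $(\sqrt{-b}/2)^{2N}$ that will combine with the normalisation later.

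The heart of the argument is the evaluation of
\begin{equation*}
\frac{1}{2\pi}\int_{0}^{b}\frac{v(x)}{\sqrt{(b-x)x}}\left[\frac{\sqrt{z(z-b)}}{x-z}+1\right]dx
\end{equation*}
for $v(x)=x^{\beta}-\alpha\log x$. Setting $x=bt$ and invoking Euler's integral $\int_{0}^{1}t^{s-1}(1-t)^{c-s-1}(1-wt)^{-a}\,dt=B(s,c-s)\,{}_{2}F_{1}(a,s;c;w)$, with $w=b/z$, reduces the $x^{\beta}$ contribution to a Gauss hypergeometric function and delivers precisely $-\mathcal{I}(z)$ in either of the two equivalent shapes of (\ref{0.01}); the two shapes are related by the Euler/Pfaff linear transformations of ${}_{2}F_{1}$. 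The $-\alpha\log x$ contribution is handled through the classical evaluations
\begin{equation*}
\frac{1}{2\pi}\int_{0}^{b}\frac{\log x}{\sqrt{(b-x)x}}\,dx=\log\frac{b}{4},\qquad \frac{1}{2\pi}\int_{0}^{b}\frac{\log x}{\sqrt{(b-x)x}}\cdot\frac{\sqrt{z(z-b)}}{x-z}\,dx=-2\log\!\left(\tfrac{\sqrt{-z}+\sqrt{b-z}}{2}\right)+\log\frac{b}{4},
\end{equation*}
valid for $z\notin[0,b]$ with the principal branches of the square roots; in terms of $\eta$ these generate the prefactor $(-z)^{-\alpha/2}$ and an additional power $(\sqrt{\eta}+\sqrt{1+\eta})^{\alpha}$, giving the advertised exponent $2N+1+\alpha$ once combined with $S_{1}$ and the outer log of $S_{2}$.

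To close the proof, the endpoint $b$ is fixed by the standard string/equilibrium condition $\frac{1}{2\pi}\int_{0}^{b}\frac{x\,v'(x)}{\sqrt{(b-x)x}}\,dx=N+\alpha/2$; a direct Beta-function evaluation of $\int_{0}^{b}x^{\beta-1/2}(b-x)^{-1/2}\,dx$ then yields $b=C(2N+\alpha)^{1/\beta}$ with the stated $C(\beta)$. The overall prefactor $(-1)^{N}/\sqrt{2\pi b}$ finally arises by combining (a) the constant $(\sqrt{-b})^{2N+1+\alpha}$ accumulated when passing from $(\sqrt{z}+\sqrt{z-b})$ to $(\sqrt{\eta}+\sqrt{1+\eta})$ inside the three pieces above with (b) the Coulomb-fluid asymptotic for the squared $L^{2}$-norm $h_{N}$ (cf.\ \cite{C20}), so that the resulting $P_{N}$ has the correct monic leading coefficient. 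The main obstacle will be the careful branch and contour bookkeeping for $z\notin[0,b]$: producing simultaneously the correct factor $(-z)^{-\alpha/2}$ and the sign $(-1)^{N}$ requires tracking the principal square roots $\sqrt{-b}$, $\sqrt{-z}$, $\sqrt{z(z-b)}$ and the principal logarithms consistently across $S_{1}$, both pieces of $S_{2}$, and the normalisation; by comparison, recognising the particular ${}_{2}F_{1}$ transformations that yield the two equivalent forms of $\mathcal{I}(z)$ is mostly mechanical.
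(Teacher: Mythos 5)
Your route is the same as the paper's: set $a=0$ in the Coulomb--fluid formula (\ref{pn}), split $v(x)=x^{\beta}-\alpha\log x$, evaluate the $x^{\beta}$ contribution to $S_{2}$ via Euler's integral for ${}_{2}F_{1}$ (giving $\mathcal{I}(z)$), evaluate the $\log$ contribution via the classical singular integrals, fix $b$ from the endpoint condition, and pass to the orthonormal polynomials using the \cite{C20} normalisation (which the paper packages as Lemma \ref{lem1111111} rather than as an $h_{N}$ asymptotic, but these are the same thing). So there is no difference of method to report.

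There is, however, a concrete error in the step that is supposed to produce the $(-z)^{-\alpha/2}$ prefactor and the $+\alpha$ in the exponent $2N+1+\alpha$. The correct evaluation of the weighted log integral is, with principal branches and $z\notin[0,b]$,
\begin{equation*}
\frac{\sqrt{z(z-b)}}{2\pi}\int_{0}^{b}\frac{\log x}{\sqrt{x(b-x)}\,(x-z)}\,dx
=\frac{1}{2}\log(-z)-\log\left(\sqrt{\eta}+\sqrt{1+\eta}\right)
=\log\frac{\sqrt{(-z)\,b}}{\sqrt{-z}+\sqrt{b-z}},
\end{equation*}
up to an overall sign governed by the branch of $\sqrt{z(z-b)}$ (the paper's branch choice flips it, yielding its $\mathcal{I}_{1}$). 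Your expression $-2\log\bigl(\tfrac{\sqrt{-z}+\sqrt{b-z}}{2}\bigr)+\log\tfrac{b}{4}=\log\tfrac{b}{(\sqrt{-z}+\sqrt{b-z})^{2}}$ has no $\log(-z)$ term and twice the correct power of $\sqrt{-z}+\sqrt{b-z}$; it looks like the identity $\int_{0}^{b}\frac{\log(x+t)}{\sqrt{x(b-x)}}dx=2\pi\log\tfrac{\sqrt{t}+\sqrt{t+b}}{2}$ applied with the wrong kernel. A quick check exposes it: as $z\to-\infty$ the left-hand side tends to the finite limit $\pm\tfrac{1}{2}\log\tfrac{b}{4}$, while your right-hand side diverges like $-\log(-z)$. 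As written, your formulas would give $(\sqrt{\eta}+\sqrt{1+\eta})^{2\alpha}$ instead of $(\sqrt{\eta}+\sqrt{1+\eta})^{\alpha}$ and would lose $(-z)^{-\alpha/2}$ entirely; you then simply assert the correct outcome. Your unweighted evaluation is also off by a factor of two: $\tfrac{1}{2\pi}\int_{0}^{b}\frac{\log x}{\sqrt{x(b-x)}}dx=\tfrac{1}{2}\log\tfrac{b}{4}$. Two smaller points: state the endpoint condition unambiguously (with the full $v'$, including $-\alpha/x$, the right-hand side is $N$; with only the confining part it is $N+\alpha/2$ --- mixing the two gives $b\propto(2N+2\alpha)^{1/\beta}$), and the two forms of $\mathcal{I}(z)$ in (\ref{0.01}) are related by the two-term connection formulas for ${}_{2}F_{1}$ (whence the additive $\sec(\pi\beta)$ term), not by Euler/Pfaff transformations, which never produce such a term.
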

\begin{proof}
For our problem, $a=0$, whilst $b(N,\alpha,\beta)$ follows from the supplementary condition \cite{C20,C23}
\begin{equation*}
\int_{a}^{b}\frac{xv'(x)}{\sqrt{(b-x)(x-a)}}=2\pi N,
\end{equation*}
where
 \begin{equation*}
v(x)=-\ln w(x)=-\alpha\ln x+x^{\beta}.
\end{equation*}
Hence we have
\begin{equation*}
b=C(2N+\alpha)^{\frac{1}{\beta}} ~~~{\rm with} ~~~C:=4\left[\frac{\Gamma(\beta+1)\Gamma(\beta)}{\Gamma(2\beta+1)}\right]^{\frac{1}{\beta}}.
\end{equation*}

Let $\eta:=-\frac{z}{b},~z\notin[0,b]$, by taking the branch $-b\eta=b\eta{\rm e}^{{\rm i}\pi}$, $-b\eta-b=b(1+\eta){\rm e}^{{\rm i}\pi}$ we have
\begin{equation*}
\begin{split}
-S_{1}(z) =\ln\left[2^{-1}\cdot\left(\eta(\eta+1)\right)^{-\frac{1}{4}}\left(\sqrt{\eta+1}+\sqrt{\eta}\right)\right],\\
\end{split}
\end{equation*}
\begin{equation*}
\begin{split}
-S_{2}(z)  &=N\ln\left(\frac{\sqrt{-b\eta}+\sqrt{-b\eta-b}}{2}\right)^{2}-\frac{1}{2\pi}\int_{0}^{b}\frac{-\alpha\ln x+x^{\beta}}{\sqrt{(b-x)x}}\left[\frac{\sqrt{z(z-b)}}{x-z}+1\right]dx\\
         &=N\ln\frac{-b\left(\sqrt{\eta}+\sqrt{\eta+1}\right)^{2}}{4}-f(z)-K,\\
\end{split}
\end{equation*}
where
\begin{equation*}
\begin{split}
K:=\frac{1}{2\pi}\int_{0}^{b}\frac{-\alpha\ln x+x^{\beta}}{\sqrt{(b-x)x}}dx=-\frac{\alpha}{2}\ln b+\alpha\ln 2+\frac{2N+\alpha}{2\beta},
\end{split}
\end{equation*}
and $f(z)$ is defined by
\begin{equation}\label{0.1}
f(z):=\frac{\sqrt{z(z-b)}}{2\pi}\int_{0}^{b}\frac{-\alpha\ln x+x^{\beta}}{\sqrt{x(b-x)}(x-z)}dx,~\text{$z\notin [0,b]$}.
\end{equation}
Next, we will focus on the explicit formula of $f(z)$. From (\ref{0.1}), we have
\begin{equation*}
\begin{split}
&f(z)  :=\mathcal{I}_{1}(z)+\mathcal{I}(z)\\
         &=:-\frac{\alpha\sqrt{z(z-b)}}{2\pi}\int_{0}^{b}\frac{\ln x}{\sqrt{x(b-x)}(x-z)}dx+\frac{\sqrt{z(z-b)}}{2\pi}\int_{0}^{b}\frac{x^{\beta}}{\sqrt{x(b-x)}(x-z)}dx.\\
\end{split}
\end{equation*}
With the aid of the integral identities in the Appendix, we get
\begin{equation*}
\begin{split}
\mathcal{I}_{1}(z)=-\frac{\alpha\sqrt{z(z-b)}}{2\pi}\int_{0}^{b}\frac{\ln x}{(x-z)\sqrt{x(b-x)}}dx=\frac{\alpha}{2}\ln (-z)-\alpha\ln\left(\sqrt{\eta+1}+\sqrt{\eta}\right).
\end{split}
\end{equation*}
From the definition and basic properties of the Hypergeometric function \cite{new2},
\begin{equation*}
\begin{split}
\mathcal{I}(z)&=\frac{\sqrt{z(z-b)}}{2\pi}\int_{0}^{1}\frac{(by)^{\beta}}{(by-z)\sqrt{by(b-by)}}bdy\\
&=-\sqrt{\frac{z(z-b)}{\pi}}\cdot\frac{b^{\beta}}{2z}\cdot \frac{\Gamma\left(\frac{1}{2}+\beta\right)}{\Gamma(1+\beta)}\cdot{_{2}F_{1}}\left(1,\frac{1}{2}+\beta;1+\beta;-\frac{1}{\eta}\right)\\
&=-\frac{2N+\alpha}{2\beta-1}\sqrt{\eta(1+\eta)}\cdot{_{2}F_{1}}\left(1,1-\beta;\frac{3}{2}-\beta;-\eta\right)-\frac{(-z)^{\beta}}{2}\cdot\sec(\pi\beta)\\
&=-\frac{2N+\alpha}{2\beta}\sqrt{\frac{\eta}{1+\eta}}
\cdot{_{2}F_{1}}\left(1,\frac{1}{2};1+\beta;\frac{1}{1+\eta}\right).\\
\end{split}
\end{equation*}
Consequently, by (\ref{pn}), the monic orthogonal polynomials can be obtained as follows:
\begin{equation*}
P_{N}(z)\simeq\frac{(-1)^{N}}{2^{\alpha+1}}\cdot\left(\frac{b}{4}\right)^{N}\cdot\left(\sqrt{\eta}+\sqrt{1+\eta}\right)^{2N+\alpha+1}\cdot\frac{\exp\left[-\frac{2N+\alpha}{2\beta}-\mathcal{I}(z)
\right]}{\eta^{\frac{2\alpha+1}{4}}(1+\eta)^{\frac{1}{4}}}.
\end{equation*}

Thus the orthonomal polynomials $\mathcal{P}_{N}(z)$ of Theorem \ref{1111} can be obtained using the standard method, stated as the below Lemma.
\end{proof}

\begin{lem}\label{lem1111111}
{\rm \cite{C20}} The orthonomal polynomials $\mathcal{P}_{N}(z)$ with respect to the weight $w(x)$, i.e.
\begin{equation*}
\int_{a}^{b}\left[\mathcal{P}_{N}(x)\right]^{2}w(x)dx=1,
\end{equation*}
can be given by:
\begin{equation*}
\mathcal{P}_{N}(z)=\sqrt{\frac{2}{\pi(b-a)}}\exp\left[\frac{A}{2}\right]P_{N}(z),
\end{equation*}
where
\begin{equation*}
A:=2\int_{a}^{b}\frac{v(x)dx}{2\pi\sqrt{(b-x)(x-a)}}-2N\log\left(\frac{b-a}{4}\right),
\end{equation*}
and the orthogonal polynomials $P_{N}(z)$ is approximated by (\ref{pn}).
\end{lem}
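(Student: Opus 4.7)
The statement is really an asymptotic normalisation result: because by definition $\mathcal{P}_N(z)=h_N^{-1/2}P_N(z)$ with
\begin{equation*}
h_N:=\int_a^b P_N(x)^2\, w(x)\,dx,
\end{equation*}
the lemma is equivalent to proving $h_N \simeq \tfrac{\pi(b-a)}{2}\exp[-A]$ as $N\to\infty$ and taking the reciprocal square root. I would establish this by substituting the ansatz (\ref{pn}) into the integral for $h_N$, after extending $S_1(z)$ and $S_2(z)$ across the cut $[a,b]$ via Plemelj boundary values.

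\textbf{Boundary values on the cut.} On the upper edge $z=y+i0$ with $y\in(a,b)$ I take the branch $\sqrt{z-a}\to\sqrt{y-a}$, $\sqrt{z-b}\to i\sqrt{b-y}$. The closed form for $e^{-S_1(z)}$ supplied in the excerpt immediately gives
\begin{equation*}
|e^{-S_1(y+i0)}|^2 = \frac{1}{4}\Bigl[\sqrt{\tfrac{b-y}{y-a}}+\sqrt{\tfrac{y-a}{b-y}}\Bigr] = \frac{b-a}{4\sqrt{(b-y)(y-a)}}.
\end{equation*}
For $S_2$, the $N$-dependent logarithm has modulus $\sqrt{b-a}/2$, so it contributes $N\log[(b-a)/4]$ to $\mathrm{Re}(-S_2)$. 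The Cauchy-type integral is treated via Plemelj: $\sqrt{(z-a)(z-b)}\to i\sqrt{(y-a)(b-y)}$ and $(t-y-i0)^{-1}=\mathrm{PV}(t-y)^{-1}+i\pi\delta(t-y)$, so the delta-term extracts $\tfrac12 v(y)$ and the $+1$ kernel contributes the $y$-independent average $-\tfrac{1}{2\pi}\int_a^b v(x)/\sqrt{(b-x)(x-a)}\,dx$. Using the definition of $A$ to combine these pieces yields $\mathrm{Re}(-S_2(y+i0))=\tfrac12(v(y)-A)$, i.e. $|e^{-S_2(y+i0)}|^2=e^{v(y)-A}$. Multiplying by $w(y)=e^{-v(y)}$ produces the crucial cancellation
\begin{equation*}
|e^{-S_1-S_2}(y+i0)|^2\,w(y) \;=\; \frac{(b-a)\,e^{-A}}{4\sqrt{(b-y)(y-a)}}.
\end{equation*}

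\textbf{Averaging and conclusion.} Since $P_N$ is real on $(a,b)$, its two boundary values must combine as $P_N(y)\simeq 2\,\mathrm{Re}\,e^{-S_1-S_2}(y+i0)$. Squaring yields a smooth piece $2|e^{-S_1-S_2}|^2 w$ plus a rapidly oscillating piece $2\,\mathrm{Re}\,e^{-2(S_1+S_2)(y+i0)}\,w$, whose phase has derivative of order $N$ and never vanishes on the open support; Riemann-Lebesgue therefore makes the oscillatory contribution $o(1)$ relative to the smooth one. Using $\int_a^b dy/\sqrt{(b-y)(y-a)}=\pi$ gives $h_N\simeq \tfrac{\pi(b-a)}{2}e^{-A}$, whence the lemma. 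The main obstacle is justifying the WKB ansatz uniformly up to the hard edges $y=a,b$, where (\ref{pn}) must be replaced by a Plancherel-Rotach/Bessel-type transition; the cleanest route to absorb both this endpoint issue and the Riemann-Lebesgue step into a single rigorous error bound is to invoke the strong asymptotics produced by the Riemann-Hilbert analysis of Deift-Kriecherbauer-McLaughlin-Venakides-Zhou, within which everything else above is a routine explicit computation.
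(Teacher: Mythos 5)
The paper gives no proof of this lemma at all: it is imported verbatim from the cited reference (Chen--Lawrence, J.~Phys.~A \textbf{31} (1998)), so there is nothing internal to compare against. What you have written is a correct reconstruction of the normalisation argument that underlies that citation, and your computations check out. On the upper edge $z=y+i0$ the closed form for ${\rm e}^{-S_1}$ gives $\frac{1}{2}\bigl[(\tfrac{b-y}{y-a})^{1/4}{\rm e}^{i\pi/4}+(\tfrac{y-a}{b-y})^{1/4}{\rm e}^{-i\pi/4}\bigr]$, whose squared modulus is $\frac{b-a}{4\sqrt{(b-y)(y-a)}}$ because the cross term carries $2\cos(\pi/2)=0$; the $N\log\frac{b-a}{4}$ contribution from $|\sqrt{y-a}+i\sqrt{b-y}|^2/4=\frac{b-a}{4}$ is right; and the Plemelj delta-term does extract exactly $\tfrac12 v(y)$, so that ${\rm Re}(-S_2(y+i0))=\tfrac12\bigl(v(y)-A\bigr)$ and the weight cancels, leaving $|{\rm e}^{-S_1-S_2}|^2w=\frac{(b-a){\rm e}^{-A}}{4\sqrt{(b-y)(y-a)}}$. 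With $P_N\simeq 2\,{\rm Re}\,{\rm e}^{-S_1-S_2}$ and $\int_a^b\frac{dy}{\sqrt{(b-y)(y-a)}}=\pi$ this yields $h_N\simeq\frac{\pi(b-a)}{2}{\rm e}^{-A}$ and hence the stated prefactor. The two caveats you flag yourself are the genuine ones: the WKB form fails in $O(N^{-2/3})$-neighbourhoods of the edges (harmless here since $[(b-y)(y-a)]^{-1/2}$ is integrable, but a fully rigorous treatment needs the Bessel/Airy local parametrices), and the suppression of the oscillatory piece needs the phase derivative, which is $\pi N$ times the equilibrium density to leading order, to be bounded away from zero in the bulk. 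Given that the paper works throughout at the level of formal Coulomb-fluid asymptotics, your derivation is at least as rigorous as the surrounding material and supplies a proof the paper merely outsources.
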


\begin{rem}\label{rem01}
Apparently, the first representation in (\ref{0.01}) is more convenient for sufficiently large $N$, where $|\eta|\ll1$. However, it cannot be used for $\beta=N+\frac{1}{2},~N=1,2,\ldots$ by the nature of the Hypergeometric function, that is why the second expression in (\ref{0.01}) is needed.
\end{rem}
To make further progress, we will be continuing to simplify the representation of $\mathcal{P}_{N}(z)$. Using the inverse hyperbolic sine and the formula in \cite{new2}  (cf. 9.121. 26), the following identity holds
\begin{equation}\label{100}
\log\left(\sqrt{\eta}+\sqrt{\eta+1}\right)={\rm arcsinh}\sqrt{\eta}=\sqrt{\eta}\cdot{_{2}F_{1}}\left(\frac{1}{2},\frac{1}{2};\frac{3}{2};-\eta\right).
\end{equation}
According to this, if we denote $E[\beta-\frac{1}{2}]$\footnote[1]{ Throughout this paper, $E[x]$ denotes the integer part of $x$.} by $E_{\beta}$, we have
\begin{lem}\label{2222}
The asymptotic expression of the polynomials for $z\notin[0,\infty)$, $|\eta|\ll1$, is,
\begin{equation}\label{0.3}
\begin{split}
\mathcal{P}_{N}(z)\simeq\frac{(-1)^{N}\eta^{\frac{1}{4}}}{\sqrt{2\pi (-z)^{\alpha+1}}}\cdot\exp\left[-\mathcal{I}(z)+
\frac{(-z)^{\beta}}{2\sqrt{\pi}C^{\beta}}\sum_{k=0}^{E_{\beta}}(-1)^{k}\cdot a_{k}\cdot\eta^{k-\beta+\frac{1}{2}}
\right],
\end{split}
\end{equation}
where $\mathcal{I}(z)$ is given in (\ref{0.01}) and $a_{k}$ is defined as
\begin{equation}\label{0.02}
a_{k}:=\frac{\Gamma\left(k+\frac{1}{2}\right)}{\left(k+\frac{1}{2}\right)\Gamma\left(k+1\right)}.
\end{equation}
\end{lem}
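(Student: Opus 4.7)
The strategy is to start from the asymptotic expression in Theorem \ref{1111} and massage both the algebraic prefactor and the exponent so that the $b$-dependence is traded for explicit factors of $(-z)$ and $\eta$, via the identities $b=-z/\eta$ and $2N+\alpha=b^{\beta}/C^{\beta}$. In the regime $|\eta|\ll 1$, any $1+O(\eta)$ multiplicative correction contributes only $1+o(1)$ and may be absorbed into $\simeq$.

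For the prefactor I would write $\sqrt{b}=\sqrt{-z}\,\eta^{-1/2}$ and $[\eta(1+\eta)]^{1/4}=\eta^{1/4}(1+o(1))$. Substituting these into the algebraic prefactor of Theorem \ref{1111} and cancelling gives
\[
(-z)^{-\alpha/2}\,\frac{(-1)^N}{\sqrt{2\pi}}\cdot\frac{\eta^{1/4}}{\sqrt{-z}}=\frac{(-1)^N \eta^{1/4}}{\sqrt{2\pi(-z)^{\alpha+1}}},
\]
which is exactly the prefactor in (\ref{0.3}).

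For the exponent I would plug the identity (\ref{100}) into Theorem \ref{1111} and expand ${_2}F_1(1/2,1/2;3/2;-\eta)=\sum_{k\ge 0}c_k(-\eta)^k$. Simplifying the Pochhammer ratio $(1/2)_k(1/2)_k/[(3/2)_k k!]$ via $\Gamma(k+3/2)=(k+1/2)\Gamma(k+1/2)$ one checks that $c_k=a_k/(2\sqrt{\pi})$ with $a_k$ as in (\ref{0.02}). Hence
\[
\log\bigl(\sqrt{\eta}+\sqrt{\eta+1}\bigr)=\frac{1}{2\sqrt{\pi}}\sum_{k=0}^{\infty}(-1)^k a_k\,\eta^{k+1/2}.
\]
Splitting $2N+1+\alpha=(2N+\alpha)+1$, the ``$+1$'' contribution is just $\log(\sqrt{\eta}+\sqrt{\eta+1})=O(\sqrt{\eta})=o(1)$ and can be dropped. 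In the remaining $(2N+\alpha)$-piece I would then substitute $2N+\alpha=b^{\beta}/C^{\beta}$ and $\eta^{k+1/2}=(-z)^{k+1/2}/b^{k+1/2}$, which rewrites the $k$-th summand as $\frac{(-z)^{\beta}}{2\sqrt{\pi}\,C^{\beta}}(-1)^k a_k\,\eta^{k-\beta+1/2}$, i.e.\ the $k$-th term of (\ref{0.3}).

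The main obstacle is to justify the truncation at $k=E_\beta=[\beta-1/2]$. Each summand above has magnitude of order $b^{\beta-k-1/2}$, so for $k>E_\beta$ the exponent $\beta-k-1/2$ is strictly negative and the term is $o(1)$ as $b\to\infty$; such terms enter the exponential as $1+o(1)$ and so disappear under $\simeq$. A little care is needed at the borderline case when $\beta-1/2$ is itself a positive integer (so $E_\beta=\beta-1/2$): there the last retained term is only $O(1)$ rather than $o(1)$ and genuinely contributes, but the estimate above still shows that all strictly higher $k$ vanish, which is what the statement requires.
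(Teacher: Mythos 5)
Your proposal is correct and follows essentially the same route as the paper: apply the identity (\ref{100}), expand ${_{2}F_{1}}\left(\frac{1}{2},\frac{1}{2};\frac{3}{2};-\eta\right)$ and identify the coefficient ratio as $a_{k}/(2\sqrt{\pi})$, drop the $O(\sqrt{\eta})$ contribution of the ``$+1$'', substitute $2N+\alpha=b^{\beta}/C^{\beta}$ with $\eta=-z/b$, and truncate at $k=E_{\beta}$. You are in fact more explicit than the paper about the prefactor simplification and the justification of the truncation (including the borderline case $\beta=n+\frac{1}{2}$), both of which the paper leaves implicit.
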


\begin{proof}
By (\ref{100}), we find
\begin{equation*}
\begin{split}
\left(2N+\alpha+1\right)&\log\left(\sqrt{\eta}+\sqrt{1+\eta}\right)\simeq\left(2N+\alpha\right)\sqrt{\eta}\cdot{_{2}F_{1}}\left(\frac{1}{2},\frac{1}{2};\frac{3}{2};-\eta\right)\\
&=\frac{(-z)^{\beta}}{\eta^{\beta}}\frac{1}{C^{\beta}}\sqrt{\eta}\cdot\sum_{k=0}^{\infty}\frac{\left(\frac{1}{2}\right)_{k}\left(\frac{1}{2}\right)_{k}}{\left(\frac{3}{2}
\right)_{k}k!}(-\eta)^{k}\\
&\simeq\frac{(-z)^{\beta}}{2\sqrt{\pi}C^{\beta}}\sum_{k=0}^{E_{\beta}}(-1)^{k}\cdot \frac{\Gamma\left(k+\frac{1}{2}\right)}{\left(k+\frac{1}{2}\right)\Gamma\left(k+1\right)}\cdot\eta^{k-\beta+\frac{1}{2}},
\end{split}
\end{equation*}
where, the Pochhammer symbol (also called the shifted factorial) reads
\begin{equation*}
(x)_{k}:=\frac{\Gamma\left(k+x\right)}{\Gamma(x)}=x(x+1)\cdots(x+k-1).
\end{equation*}
Hence the Lemma \ref{2222} is obtained immediately.
\end{proof}

In sections 4 and 5, we will follow the techniques of \cite{C3} and \cite{C9} to show that using an appropriate selection of vectors $\left\{\xi_{m}\right\}$, that the lower bound given by (\ref{e4}) is actually an asymptotic estimate of $\lambda_{N}$ for sufficiently large $N$. Taking full advantage of the Laplace method, we can obtain an estimation of $\sum_{n=0}^{N}\mathcal{K}_{nn}$. Consequently, the asymptotic behavior of $\lambda_{N}$ follows.

As mentioned in the Remark \ref{rem01}, our problem will be discussed in two different cases.
\section{The approximation of $\lambda_{N}$ for $\beta\neq n+\frac{1}{2},{n\in\{1,2,3,\ldots\}}$}
To find the asymptotic estimate of $\sum_{n=0}^{N}\mathcal{K}_{nn}$ for $\beta\neq n+\frac{1}{2},~n\in\{1,2,3,\ldots\}$, we will first deal with the term $\mathcal{I}(z)$ in (\ref{0.3}) by using the first form in equation (\ref{0.01}).

\begin{lem}\label{300}
For $\beta\neq n+\frac{1}{2},~n\in\{1,2,3,\ldots\}$, we have
\begin{equation}\label{0.4}
\begin{split}
\mathcal{P}_{N}(z)&\simeq\frac{(-1)^{N}}{\sqrt{2\pi(-z)^{\alpha+\frac{1}{2}}}}\cdot\left[C\left(2N+\alpha\right)^{\frac{1}{\beta}}\right]^{-\frac{1}{4}}\cdot
\exp\left[\frac{(-z)^{\beta}}{2}\sec\pi\beta\right]\\
&~~~~\cdot\exp\left[\frac{\left(2N+\alpha\right)^{1-\frac{1}{2\beta}}}{2\sqrt{\pi C}}\sum_{k=0}^{E_{\beta}}(-1)^{k}A_{k}\frac{(-z)^{k+\frac{1}{2}}}{\left(C\left(2N+\alpha\right)^{\frac{1}{\beta}}\right)^{k}}\right],
\end{split}
\end{equation}
here
\begin{equation*}
A_{k}:=a_{k}+\frac{\Gamma\left(\frac{1}{2}-\beta\right)}{2\Gamma(1-\beta)}b_{k} ~~~{\rm with} ~~~ b_{k}:=\sum_{j=0}^{k}\frac{\Gamma\left(j-\frac{1}{2}\right)\Gamma\left(k-j+1-\beta\right)}{\Gamma\left(j+1\right)\Gamma\left(k-j+\frac{3}{2}-\beta\right)},
\end{equation*}
where $a_{k}$ is same as that given in (\ref{0.02}).

Specially,
\begin{equation}\label{10}
A_{0}=\frac{4\sqrt{\pi}\beta}{2\beta-1}.
\end{equation}
\end{lem}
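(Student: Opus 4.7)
The plan is to derive (\ref{0.4}) by substituting the first representation of $\mathcal{I}(z)$ from (\ref{0.01}) into Lemma~\ref{2222} and expanding the resulting hypergeometric expression as a power series in $\eta$. In the exponent $-\mathcal{I}(z)$, the term $\tfrac{(-z)^{\beta}}{2}\sec(\pi\beta)$ immediately produces the factor $\exp[\tfrac{(-z)^{\beta}}{2}\sec\pi\beta]$ of (\ref{0.4}); the task is then to combine the series coming from the remaining hypergeometric piece of $-\mathcal{I}(z)$ with the $a_{k}$-sum already present in Lemma~\ref{2222} into a single sum with coefficients $A_{k}$.

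The key algebraic step is the Cauchy product
\begin{equation*}
\sqrt{\eta(1+\eta)}\,{_{2}F_{1}}\!\left(1,1-\beta;\tfrac{3}{2}-\beta;-\eta\right)=\sqrt{\eta}\sum_{j\geq 0}\binom{1/2}{j}\eta^{j}\sum_{k\geq 0}\frac{(1-\beta)_{k}}{(3/2-\beta)_{k}}(-\eta)^{k}.
\end{equation*}
Writing $\binom{1/2}{j}=\frac{(-1)^{j+1}\Gamma(j-1/2)}{2\sqrt{\pi}\,\Gamma(j+1)}$ and $\frac{(1-\beta)_{m}}{(3/2-\beta)_{m}}=\frac{\Gamma(m+1-\beta)\Gamma(3/2-\beta)}{\Gamma(m+3/2-\beta)\Gamma(1-\beta)}$, the coefficient of $\eta^{k+1/2}$ collapses to $\frac{(-1)^{k+1}\Gamma(3/2-\beta)}{2\sqrt{\pi}\,\Gamma(1-\beta)}\,b_{k}$, with $b_{k}$ exactly as in the statement. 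Multiplying by the overall factor $(2N+\alpha)/(2\beta-1)$ and using $\Gamma(3/2-\beta)=(\tfrac{1}{2}-\beta)\Gamma(\tfrac{1}{2}-\beta)$ cancels the $(2\beta-1)$ denominator and yields
\begin{equation*}
\frac{2N+\alpha}{2\sqrt{\pi}}\sum_{k\geq 0}(-1)^{k}\,\frac{\Gamma(1/2-\beta)}{2\Gamma(1-\beta)}\,b_{k}\,\eta^{k+1/2}.
\end{equation*}

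On the other hand, after substituting $(-z)^{\beta}/C^{\beta}=(2N+\alpha)\eta^{\beta}$, the $a_{k}$-sum of Lemma~\ref{2222} becomes $\frac{2N+\alpha}{2\sqrt{\pi}}\sum_{k=0}^{E_{\beta}}(-1)^{k}a_{k}\eta^{k+1/2}$. Adding the two series gives a single sum whose $k$-th coefficient is precisely $A_{k}=a_{k}+\frac{\Gamma(1/2-\beta)}{2\Gamma(1-\beta)}b_{k}$; truncation at $k=E_{\beta}$ is legitimate because for $k\geq E_{\beta}+1$ one has $k+\tfrac{1}{2}>\beta$, whence $(2N+\alpha)\eta^{k+1/2}=O((2N+\alpha)^{1-(k+1/2)/\beta})\to 0$ and such terms are absorbed by $\simeq$. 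The substitution $\eta^{k+1/2}=(-z)^{k+1/2}/b^{k+1/2}$ with $b=C(2N+\alpha)^{1/\beta}$ then converts the prefactor to $(2N+\alpha)^{1-1/(2\beta)}/(2\sqrt{\pi C})$ and the denominators to $[C(2N+\alpha)^{1/\beta}]^{k}$, while the prefactor $\eta^{1/4}/\sqrt{(-z)^{\alpha+1}}$ of Lemma~\ref{2222} rewrites as $[C(2N+\alpha)^{1/\beta}]^{-1/4}/\sqrt{(-z)^{\alpha+1/2}}$, matching (\ref{0.4}) exactly.

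For (\ref{10}), a direct evaluation yields $a_{0}=2\sqrt{\pi}$ and $b_{0}=-2\sqrt{\pi}\,\Gamma(1-\beta)/\Gamma(3/2-\beta)$, so $A_{0}=2\sqrt{\pi}-\sqrt{\pi}\,\Gamma(1/2-\beta)/\Gamma(3/2-\beta)=2\sqrt{\pi}+\tfrac{2\sqrt{\pi}}{2\beta-1}=\tfrac{4\sqrt{\pi}\beta}{2\beta-1}$. The principal technical obstacle is the clean execution of the Cauchy product and the ensuing Gamma-function simplifications; the hypothesis $\beta\neq n+\tfrac{1}{2}$ enters precisely to keep $\Gamma(1/2-\beta)$ and the hypergeometric series ${_{2}F_{1}}(1,1-\beta;\tfrac{3}{2}-\beta;-\eta)$ regular, which is why the second form of $\mathcal{I}(z)$ in (\ref{0.01}) is needed for the complementary case.
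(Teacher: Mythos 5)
Your proposal is correct and follows essentially the same route as the paper: expand ${_{2}F_{1}}(1,1-\beta;\tfrac{3}{2}-\beta;-\eta)$ and $\sqrt{1+\eta}$ as power series, form the Cauchy product to produce the coefficients $b_{k}$, combine with the $a_{k}$-sum of Lemma \ref{2222} into $A_{k}$, truncate at $E_{\beta}$, and substitute $\eta=-z/b$. Your explicit verification of $A_{0}=\frac{4\sqrt{\pi}\beta}{2\beta-1}$ (which the paper states without computation) also checks out.
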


\begin{proof}
For $|\eta|<1$, the hypergeometric function ${_{2}F_{1}}\left(1,1-\beta;\frac{3}{2}-\beta; -\eta\right)$ has the below series expansion
\begin{equation*}
{_{2}F_{1}}\left(1,1-\beta;\frac{3}{2}-\beta; -\eta\right)=\frac{\Gamma\left(\frac{3}{2}-\beta\right)}{\Gamma\left(1-\beta\right)}\sum_{k=0}^{\infty}(-1)^{k}\frac{\Gamma\left(k+1-\beta\right)}{\Gamma\left(k+\frac{3}{2}-\beta\right)}
\eta^{k}.
\end{equation*}
Applying the formula (\cite{new2},~{\rm p}1015)
\begin{equation*}
{_{2}F_{1}}\left(-n,\beta;\beta;-z\right)=(1+z)^{n},
\end{equation*}
then for $|\eta|<1$, $\sqrt{1+\eta}$ may be written as
\begin{equation}\label{0.9}
\sqrt{1+\eta}=\frac{1}{\Gamma\left(-\frac{1}{2}\right)}\sum_{k=0}^{\infty}(-1)^{k}\frac{\Gamma\left(k-\frac{1}{2}\right)}{\Gamma\left(k+1\right)}\eta^{k}.
\end{equation}
So as $\eta\rightarrow0$, the expansion for $\mathcal{I}(z)$ is
\begin{equation}\label{500}
\begin{split}
&\mathcal{I}(z)\simeq-\frac{1}{4\sqrt{\pi}}\left(\frac{-z}{C}\right)^{\beta}\frac{\Gamma\left(\frac{1}{2}-\beta\right)}{\Gamma(1-\beta)}\sum_{k=0}^
{E_{\beta}}(-1)^{k}b_{k}\cdot\eta^{k-\beta+\frac{1}{2}}-\frac{(-z)^{\beta}}{2}\sec\pi\beta,\\
\end{split}
\end{equation}
where
\begin{equation*}
b_{k}:=\sum_{j=0}^{k}\frac{\Gamma\left(j-\frac{1}{2}\right)\Gamma\left(k-j+1-\beta\right)}{\Gamma\left(j+1\right)\Gamma\left(k-j+\frac{3}{2}-\beta\right)}.
\end{equation*}
Substituting (\ref{500}) into (\ref{0.3}), and bear in mind $\eta=-zC^{-1}(2N+\alpha)^{-\frac{1}{\beta}}$, then the Lemma \ref{300} follows.
\end{proof}

\begin{rem}\label{400}
Letting $\alpha=0,~ \beta=1$, we find $C=2$ and $A_{0}=4\sqrt{\pi}$. Consequently, the classical result for Laguerre polynomials due to Perron \cite{new3} is recovered,
\begin{equation*}
\mathcal{P}_{N}(z)\simeq\frac{(-1)^{N}}{2\sqrt{\pi}}\left(-zN\right)^{-\frac{1}{4}}\exp\left[\frac{z}{2}+2\sqrt{-zN}\right],~~z\notin[0,\infty).
\end{equation*}
\end{rem}
\begin{rem}\label{20}
The Laplace method \cite{C27} gives,
\begin{equation*}
\int_{a}^{b}f(t){\rm e}^{-\lambda g(t)}dt\simeq {\rm e}^{-\lambda g(c)}f(c)\sqrt{\frac{2\pi}{\lambda g''(c)}}~,~~~~{\rm as}~~\lambda\rightarrow\infty,
\end{equation*}
where $g$ assumes a strict minimum over $[a,b]$ at an interior critical point $c$, such that
\begin{equation*}
\begin{split}
g'(c)=0,~~~~~g''(c)>0~~~~~{\rm and}~~~~~f(c)\neq0.\\
\end{split}
\end{equation*}
An alternative expression for Laplace method may be stated as:

If for $x\in[a,b]$, the real continuous function $g(x)$ has as its maximum the value $g(b)$, then as $N\rightarrow\infty$
\begin{equation}\label{laplace2}
\int_{a}^{b}f(x){\rm e}^{Ng(x)}dx\simeq\frac{f(b){\rm e}^{Ng(b)}}{Ng'(b)}.
\end{equation}
\end{rem}
\begin{thm}\label{th3}
For $\beta\neq n+\frac{1}{2},~n\in\{1,2,3,\ldots\}$, the smallest eigenvalue $\lambda_{N}$ of the $\mathcal{H}_{N}$ can be approximated by
\begin{equation}\label{0.03}
\begin{split}
&\lambda_{N}\simeq\\
&2^{\frac{5}{2}}\pi^{\frac{5}{4}}C^{-\frac{1}{4}}A_{0}^{\frac{1}{2}}\left(2N+\alpha\right)^{\frac{1}{2}-\frac{1}{4\beta}}\exp\left[-\sec\pi\beta-\frac{\left(2N+\alpha\right)^{1-\frac{1}{2\beta}}}
{\sqrt{\pi C}}\sum_{k=0}^{E_{\beta}}(-1)^{k}\frac{A_{k}}{C^{k}}\left(2N+\alpha\right)^{-\frac{k}{\beta}}\right],
\end{split}
\end{equation}
where
\begin{equation*}
C:=4\left[\frac{\Gamma(\beta+1)\Gamma(\beta)}{\Gamma(2\beta+1)}\right]^{\frac{1}{\beta}},~~   A_{k}:=a_{k}+\frac{\Gamma\left(\frac{1}{2}-\beta\right)}{2\Gamma(1-\beta)}b_{k},~~~\beta>\frac{1}{2},
\end{equation*}
with
\begin{equation*}
a_{k}:=\frac{\Gamma\left(k+\frac{1}{2}\right)}{\left(k+\frac{1}{2}\right)\Gamma\left(k+1\right)}
,~~~~b_{k}:=\sum_{j=0}^{k}\frac{\Gamma\left(j-\frac{1}{2}\right)\Gamma\left(k-j+1-\beta\right)}{\Gamma\left(j+1\right)\Gamma\left(k-j+\frac{3}{2}-\beta\right)}
~.
\end{equation*}
\end{thm}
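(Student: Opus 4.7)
The starting point is the lower bound (\ref{e4}), $\lambda_{N}\geq 2\pi/\sum_{n=0}^{N}\mathcal{K}_{nn}$. The plan has two parts: first, compute $\sum_{n=0}^{N}\mathcal{K}_{nn}$ asymptotically by a double application of Laplace's method -- once on each integral $\mathcal{K}_{nn}=\int_{-\pi}^{\pi}|\mathcal{P}_{n}(e^{\mathrm{i}\theta})|^{2}\,d\theta$, then on the resulting sum over $n$ -- and second, verify that the Cauchy--Schwarz slack built into (\ref{e4}) is asymptotically negligible, so that the lower bound is in fact an asymptotic equality.

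\textit{Evaluating $\mathcal{K}_{nn}$.} Substituting $z=e^{\mathrm{i}\theta}$ into the expression (\ref{0.4}) supplied by Lemma \ref{300}, the leading exponent of $|\mathcal{P}_{n}(e^{\mathrm{i}\theta})|^{2}$ is $\frac{(2n+\alpha)^{1-1/(2\beta)}A_{0}}{\sqrt{\pi C}}\operatorname{Re}(-z)^{1/2}$. A careful branch analysis shows $\operatorname{Re}(-z)^{1/2}=\cos((\theta-\pi)/2)$ for $\theta$ near $\pi$, so the integrand attains its unique maximum at $z=-1$ (i.e.\ at $\theta=\pm\pi$ on the circle). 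Expanding to second order there and applying the Gaussian Laplace formula from Remark \ref{20} gives
\[
\mathcal{K}_{nn}\;\simeq\;|\mathcal{P}_{n}(-1)|^{2}\sqrt{\frac{8\pi\sqrt{\pi C}}{A_{0}(2n+\alpha)^{1-1/(2\beta)}}}.
\]
The subleading $k\geq 1$ terms in (\ref{0.4}) and the factor $\exp[\tfrac{1}{2}(-z)^{\beta}\sec\pi\beta]$ share the same maximising point and only perturb the Gaussian width multiplicatively by $1+o(1)$.

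\textit{Summing over $n$.} Since $|\mathcal{P}_{n}(-1)|^{2}$ carries an exponential $\exp[(2n+\alpha)^{1-1/(2\beta)}A_{0}/\sqrt{\pi C}+\cdots]$ strictly increasing in $n$, the sum $\sum_{n=0}^{N}\mathcal{K}_{nn}$ is dominated by $n=N$. Approximating the sum by an integral and applying the boundary form (\ref{laplace2}) of Laplace's method with $g'(N)\sim\tfrac{2A_{0}(1-1/(2\beta))}{\sqrt{\pi C}}(2N+\alpha)^{-1/(2\beta)}$ yields
\[
\sum_{n=0}^{N}\mathcal{K}_{nn}\;\simeq\;\mathcal{K}_{NN}\cdot\frac{\sqrt{\pi C}\,(2N+\alpha)^{1/(2\beta)}}{2A_{0}\,(1-1/(2\beta))}.
\]
After dividing $2\pi$ by this and combining all polynomial scalings in $(2N+\alpha)$, the predicted exponent $\tfrac12-\tfrac{1}{4\beta}$ appears; the constant $2^{5/2}\pi^{5/4}C^{-1/4}A_{0}^{1/2}$ then emerges after the algebraic simplification $(1-\tfrac{1}{2\beta})A_{0}=2\sqrt{\pi}$, which follows immediately from (\ref{10}).

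\textit{Sharpness and main obstacle.} To promote (\ref{e4}) to an asymptotic equality, one must exhibit $\{\xi_{n}\}$ saturating it. Following \cite{C3,C9}, I would take $\xi_{n}=c\,\zeta_{n}\mathcal{K}_{nn}^{1/2}$ with $c$ normalising and unimodular phases $\zeta_{n}$ chosen to align the values $\mathcal{P}_{n}(-1)$; this choice saturates the second Cauchy--Schwarz step used to derive (\ref{e4}). The first step, $|\mathcal{K}_{mn}|\leq\mathcal{K}_{mm}^{1/2}\mathcal{K}_{nn}^{1/2}$, becomes asymptotically sharp because after extracting the common Gaussian at $\theta=\pm\pi$ the normalised ratio $\mathcal{P}_{n}(e^{\mathrm{i}\theta})/\mathcal{P}_{n}(-1)$ reduces to essentially the same function of $\theta$ for every $n\leq N$; geometrically, the vectors $\mathcal{P}_{n}$ become nearly collinear in $L^{2}(-\pi,\pi)$. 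Controlling this near-collinearity \emph{uniformly} in $n\leq N$, so that the accumulated Cauchy--Schwarz error remains exponentially subdominant to the main term, is the step I expect to require the most care; the Laplace calculations themselves are routine once the branches of $(-z)^{k+1/2}$ and the Gaussian widths have been pinned down.
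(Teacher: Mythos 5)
Your proposal is correct and follows essentially the same route as the paper: Laplace's method applied to $\mathcal{K}_{nn}$ at the saddle $z=-1$, the boundary form of Laplace for the sum over $n$ (your identity $(1-\tfrac{1}{2\beta})A_{0}=2\sqrt{\pi}$ is exactly the simplification the paper uses implicitly), and sharpness of (\ref{e4}) via the trial vector $\xi_{\nu}=(-1)^{\nu}\sigma\mathcal{K}_{\nu\nu}^{1/2}$ supported on $N_{0}\leq\nu\leq N$ together with the near-factorization $\mathcal{K}_{\mu\nu}\simeq(-1)^{\mu+\nu}\mathcal{K}_{\mu\mu}^{1/2}\mathcal{K}_{\nu\nu}^{1/2}$. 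The uniformity issue you flag as the delicate step is handled in the paper only by restricting $\mu,\nu$ to the window $N-\omega N^{1/(2\beta)}\leq\mu,\nu\leq N$ and invoking the arbitrariness of $\omega$, so your account matches the paper's level of rigor as well as its structure.
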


\begin{proof}
Note that $A_{0}>0$ for $\beta>\frac{1}{2}$ by (\ref{10}), so the essential contribution to $\mathcal{K}_{\mu\nu}$ comes from a small neighborhood of $z=-1$ as $\mu\rightarrow\infty$ and $\nu\rightarrow\infty$. Let $\omega>0$ be a fixed number and restrict the values of $\mu$ and $\nu$ to satisfy
\begin{equation}\label{0.5}
N-\omega N^{\frac{1}{2\beta}}\leq\mu,\nu\leq N, ~~N\rightarrow\infty,
\end{equation}
thus we have
\begin{equation}\label{neweq1}
\mathcal{K}_{\mu\nu}\simeq\int_{-\varepsilon}^{\varepsilon}\mathcal{P}_{\mu}\left(-{\rm e}^{{\rm i}\theta}\right)\mathcal{P}_{\nu}\left(-{\rm e}^{-{\rm i}\theta}\right)d\theta.
\end{equation}
Expanding the integrand for $|\theta|\ll1$, we obtain
\begin{equation}\label{30}
\begin{split}
\mathcal{K}_{\mu\nu}\simeq&\frac{(-1)^{\mu+\nu}{\rm e}^{\sec\pi\beta}}{2\pi\sqrt{C}}\cdot\left(2N+\alpha\right)^{-\frac{1}{2\beta}}\cdot\int_{-\varepsilon}^{\varepsilon}\exp{\Bigg[}
\frac{1}{2\sqrt{\pi C}}\sum_{k=0}^{E_{\beta}}(-1)^{k}\frac{A_{k}}{C^{k}}\\
&\cdot\left(1-\frac{(2k+1)^{2}\theta^{2}}{8}\right)
\cdot\left((2\mu+\alpha)^{1-\frac{1}{2\beta}-\frac{k}{\beta}}+(2\nu+\alpha)^{1-\frac{1}{2\beta}-\frac{k}{\beta}}\right)\\
&+\frac{(2k+1){\rm i}\theta}{2}\left((2\mu+\alpha)^{1-\frac{1}{2\beta}-\frac{k}{\beta}}-(2\nu+\alpha)^{1-\frac{1}{2\beta}-\frac{k}{\beta}}\right){\Bigg]}d\theta.
\end{split}
\end{equation}
Note that $(2\mu+\alpha)^{1-\frac{1}{2\beta}-\frac{k}{\beta}}-(2\nu+\alpha)^{1-\frac{1}{2\beta}-\frac{k}{\beta}}$ remains bounded because of restricting $\mu$ and $\nu$ as in (\ref{0.5}),
so we can get rid of the linear term in (\ref{30}) for $\theta\ll1$. As mentioned above, contributions to the integral (\ref{30}) from $(-\infty,\varepsilon)$ and $(\varepsilon,\infty)$ are small enough compared with those from $[-\varepsilon,\varepsilon]$ as $\mu\rightarrow\infty$ and $\nu\rightarrow\infty$. Therefore, we can extend the integration interval to $\mathbb{R}$ but without affecting the approximation of $\mathcal{K}_{\mu\nu}$. Using the Laplace method given by Remark \ref{20}, we obtain
\begin{equation}\label{0.6}
\begin{split}
\mathcal{K}_{\mu\nu}\simeq&\frac{(-1)^{\mu+\nu}}{(\pi C)^{\frac{1}{4}}}\sqrt{\frac{2}{A_{0}}}\left(2N+\alpha\right)^{-\frac{1}{2}-\frac{1}{4\beta}}{\rm e}^{\sec\pi\beta}\\
&\cdot\exp\left[\frac{1}{2\sqrt{\pi C}}\sum_{k=0}^{E_{\beta}}(-1)^{k}\frac{A_{k}}{C^{k}}
\left((2\mu+\alpha)^{1-\frac{1}{2\beta}-\frac{k}{\beta}}+(2\nu+\alpha)^{1-\frac{1}{2\beta}-\frac{k}{\beta}}\right)\right].
\end{split}
\end{equation}

Observing (\ref{0.6}), we can find that when $\mu$ and $\nu$ satisfied (\ref{0.5}) and large enough,
\begin{equation}\label{0.7}
\mathcal{K}_{\mu\nu}\simeq(-1)^{\mu+\nu}\mathcal{K}_{\mu\mu}^{\frac{1}{2}}\mathcal{K}_{\nu\nu}^{\frac{1}{2}}.
\end{equation}

Using the approach of \cite{C3} and \cite{C9} with the following choices of $\left\{\xi_{\nu}\right\}$, allows us to determine the asymptotic behavior of $\lambda_{N}$ for large $N$,
\begin{equation*}
\xi_{\nu}=
\begin{cases}
(-1)^{\nu}\sigma\mathcal{K}_{\nu\nu}^{\frac{1}{2}}, ~{\rm if} ~N_{0}\leq\nu\leq N,\ \ \\\\
0,\ \
~~~~~~~~~~~~{\rm if}~\nu<N_{0}:=E\left[N-\omega N^{\frac{1}{2\beta}}\right],
\end{cases}
\end{equation*}
and the positive number $\sigma$ is determined by the condition
\begin{equation}\label{0.8}
\sum_{\nu=0}^{N}|\xi_{\nu}|^{2}=\sigma^{2}\sum_{\nu=N_{0}}^{N}\mathcal{K}_{\nu\nu}=1.
\end{equation}
It follows from (\ref{0.7}) and (\ref{0.8}) that
\begin{equation}\label{40}
\begin{split}
\sum_{\mu,\nu=0}^{N}\mathcal{K}_{\mu\nu}\xi_{\mu}\overline{\xi}_{\nu}=\sum_{\mu,\nu=N_{0}}^{N}(-1)^{\mu+\nu}\sigma^{2}\mathcal{K}_{\mu\nu}\mathcal{K}_{\mu\mu}^{\frac{1}{2}}\mathcal{K}_{\nu\nu}^{\frac{1}{2}}
\simeq\sigma^{2}\left(\sum_{\nu=N_{0}}^{N}\mathcal{K}_{\nu\nu}\right)^{2}
=\sum_{\nu=N_{0}}^{N}\mathcal{K}_{\nu\nu}.
\end{split}
\end{equation}
This means the minimum value in equation (\ref{e3}) can be approximated by (\ref{e4}), following (\ref{40}), because of the arbitrariness of $\omega$, i.e.
\begin{equation*}
\lambda_{N}\simeq\frac{2\pi}{\sum_{\nu=0}^{N}\mathcal{K}_{\nu\nu}}.
\end{equation*}
It follows that
\begin{equation}\label{5000}
\lambda_{N}\simeq\frac{2\pi}{\int_{0}^{N}\mathcal{K}_{\nu\nu}d\nu}.
\end{equation}
Substituting (\ref{0.6}) into (\ref{5000}), with a simple calculation by applying the Laplace method, see Remark \ref{20}, then the asymptotic behavior of $\lambda_{N}$, for $\beta\neq n+\frac{1}{2},n\in\{1,2,3,\ldots\}$, is obtained.
\end{proof}
\begin{exmp}\label{40000}
If we take $\alpha=-\frac{1}{2}$, $\beta=\frac{7}{4}$, then
\begin{equation*}
\lambda_{N}\simeq2^{\frac{5}{2}}\pi^{\frac{5}{4}}C^{-\frac{1}{4}}A_{0}^{\frac{1}{2}}\left(2N-\frac{1}{2}\right)^{\frac{5}{14}}
\exp\left[-\sqrt{2}-\frac{\left(2N-\frac{1}{2}\right)^{\frac{5}{7}}}{\sqrt{\pi C}}\left(A_{0}-\frac{A_{1}}{C}\left(2N-\frac{1}{2}\right)^{-\frac{4}{7}}\right)\right],
\end{equation*}
where
$C=4\left[\frac{\Gamma\left(\frac{11}{4}\right)\Gamma\left(\frac{7}{4}\right)}{\Gamma\left(\frac{9}{2}\right)}\right]^{\frac{4}{7}}$, $A_{0}=\frac{14\sqrt{\pi}}{5}$ and $A_{1}=\frac{7\sqrt{\pi}}{3}$.
\end{exmp}
\begin{cor}\label{cor1}
For the classical Laguerre weight $x^{\alpha}{\rm e}^{-x},~x\in[0,\infty),$
$\alpha>-1$, {\rm i.e.} taking $\beta=1$ for our weight $w(x)$, we have
\begin{equation*}
\lambda_{N}\simeq2^{\frac{13}{4}}\pi^{\frac{3}{2}}{\rm e}\left(2N+\alpha\right)^{\frac{1}{4}}\exp\left[-2^{\frac{3}{2}}\left(2N+\alpha\right)^{\frac{1}{2}}\right].
\end{equation*}
\end{cor}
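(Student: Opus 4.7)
The plan is to specialize Theorem \ref{th3} to $\beta = 1$, which lies in the admissible regime $\beta \neq n + \frac{1}{2}$, so formula (\ref{0.03}) applies without modification. Since the Corollary is really just a verification of what the main theorem says at a specific parameter value, the task reduces to substitution and algebraic simplification; no new analytic input is needed.

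First I would evaluate each ingredient of (\ref{0.03}) at $\beta = 1$. The cutoff becomes $E_\beta = E[1/2] = 0$, so the inner sum collapses to its single $k = 0$ term. The level constant simplifies to $C = 4\,\Gamma(2)\Gamma(1)/\Gamma(3) = 2$, and from (\ref{10}) we have $A_0 = 4\sqrt{\pi}\beta/(2\beta-1) = 4\sqrt{\pi}$. Since $\sec(\pi\beta) = \sec\pi = -1$, the term $-\sec(\pi\beta)$ inside the exponential contributes $+1$, which will surface as an explicit factor of $e$ in the final prefactor.

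Second, I would assemble the three resulting pieces. The polynomial exponent in front of $(2N+\alpha)$ becomes $1/2 - 1/(4\beta) = 1/4$; the coefficient multiplying $(2N+\alpha)^{1-1/(2\beta)} = (2N+\alpha)^{1/2}$ in the exponential evaluates to $A_0/\sqrt{\pi C} = 4\sqrt{\pi}/\sqrt{2\pi} = 2^{3/2}$; and the overall numerical constant is $2^{5/2}\pi^{5/4}\cdot C^{-1/4}\cdot A_0^{1/2} = 2^{5/2}\cdot 2^{-1/4}\cdot 2\cdot \pi^{5/4+1/4}$, which combines into $2^{13/4}\pi^{3/2}$. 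Multiplying by the $e$ from $-\sec\pi$ and the polynomial factor $(2N+\alpha)^{1/4}$ yields exactly the expression claimed in the Corollary.

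There is no genuine obstacle; the only care required is in tracking the rational exponents of $2$ and $\pi$ (the $2^{13/4}$ arises from $5/2 - 1/4 + 1 = 13/4$), and in remembering the sign $\sec\pi = -1$ which is the source of the explicit factor of $e$. As a sanity check, the companion orthonormal polynomial asymptotic in Remark \ref{400} reproduces Perron's classical formula under the same substitution, which is consistent with $C = 2$ and $A_0 = 4\sqrt{\pi}$ obtained here.
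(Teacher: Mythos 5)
Your proposal is correct and is exactly the argument the paper intends: the Corollary is a direct specialization of Theorem \ref{th3} to $\beta=1$, and your evaluations $C=2$, $A_0=4\sqrt{\pi}$, $E_\beta=0$, $\sec\pi=-1$, together with the exponent bookkeeping $2^{5/2-1/4+1}=2^{13/4}$ and $A_0/\sqrt{\pi C}=2^{3/2}$, all check out. No further comment is needed.
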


\begin{rem}
When $\alpha=0,~\beta=1$, Szeg\"{o}'s \cite{C3}\footnote{The original formula of $\lambda_{N}$ in the last equation on page 461 missed a factor of 4.} classical result for the Laguerre weight ${\rm e}^{-x}$ is recovered:
\begin{equation*}
\begin{split}
&\lambda_{N}\simeq2^{\frac{7}{2}}\pi^{\frac{3}{2}}{\rm e}N^{\frac{1}{4}}\exp\left[-4N^{\frac{1}{2}}\right].
\end{split}
\end{equation*}
\end{rem}
\begin{rem}
With the restriction of $\alpha=0, \beta\neq n+\frac{1}{2}, n\in\{1,2,3,\ldots\}$, Chen and Lawrence's result on the weight ${\rm e}^{-x^{\beta}},~x\in[0,\infty)$ is also recovered:
\begin{equation*}
\lambda_{N}\simeq
8\pi^{\frac{5}{4}}\widetilde{C}^{-\frac{1}{4}}\widetilde{A}_{0}^{\frac{1}{2}}N^{\frac{1}{2}-\frac{1}{4\beta}}\exp\left[-\sec\pi\beta-\frac{N^{1-\frac{1}{2\beta}}}
{\sqrt{\pi \widetilde{C}}}\sum_{r=0}^{E_{\beta}}(-1)^{r}\frac{\widetilde{A}_{r}}{\widetilde{C}^{r}}N^{-\frac{r}{\beta}}\right],
\end{equation*}
for details of $\widetilde{A}_{0},\widetilde{A}_{r},\widetilde{C}$, please see \cite{C9}.
\end{rem}

From (\ref{0.03}) we find that $\lambda_{N}$ is exponentially small for large $N$ and tends to $0$ as $N\rightarrow\infty$.

\section{ The approximation of $\lambda_{N}$ for $\beta=n+\frac{1}{2},{n\in\{1,2,3,\ldots\}}$}
Our goal for this section is to find the approximation of $\lambda_{N}$ for the cases where $\beta=n+\frac{1}{2},n\in\{1,2,3,\ldots\}$. Such cases, as was illustrated in Remark \ref{rem01}, require the second representation of $\mathcal{I}(z)$ in (\ref{0.01}). Before obtaining the asmptotic behavior of $\lambda_{N}$, we first establish the following lemma for $\mathcal{P}(z)$.

\begin{lem}\label{lem0.1}
For $|\eta|\ll1$, then as $N\rightarrow\infty$,
\begin{equation*}
\mathcal{I}(z)\simeq\frac{(-1)^{\beta+\frac{1}{2}}(-z)^{\beta}}{2\pi}\log\left[\frac{4}{\eta}\right]+\frac{(-z)^{\beta}}{4\pi^{\frac{3}{2}}}\sum_{k=0}^{\beta-\frac{1}{2}}(-1)^{k}
\delta_{\beta-\frac{1}{2}-k}\eta^{k-\beta+\frac{1}{2}}
\end{equation*}
where
\begin{equation*}
L_{k}:=\frac{k}{\pi}C^{k}(k),~~~{\rm and}~~~\delta_{k}:=\sum_{j=1}^{\beta-\frac{1}{2}}\frac{\gamma_{j-k}}{L_{j-\frac{1}{2}}},
\end{equation*}
with
\begin{equation*}
\gamma_{k}:=
\begin{cases}
\frac{\Gamma\left(k-\frac{1}{2}\right)}{\Gamma(k+1)}, ~{\rm if} ~~k\geq0, \\\\
0,\ \
~~~~~~ {\rm if}~~k<0.
\end{cases}
\end{equation*}
\end{lem}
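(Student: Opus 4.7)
The plan is to work with the second representation of $\mathcal{I}(z)$ in (\ref{0.01}),
\begin{equation*}
\mathcal{I}(z)=-\frac{2N+\alpha}{2\beta}\sqrt{\frac{\eta}{1+\eta}}\,{}_{2}F_{1}\!\left(1,\tfrac{1}{2};1+\beta;\tfrac{1}{1+\eta}\right),
\end{equation*}
whose argument $1/(1+\eta)$ tends to $1$ as $\eta\to 0$, and to develop ${}_{2}F_{1}$ about its unit-disc boundary. At $\beta=n+\tfrac12$ the parameters satisfy $c-a-b=\beta-\tfrac12=n\in\{1,2,\ldots\}$, which places us in the \emph{logarithmic case} of the Kummer connection formula; this simultaneously explains why the first representation used in Lemma \ref{300} breaks down (the $\Gamma(\tfrac{3}{2}-\beta)$ pole) and why a $\log(4/\eta)$ term must appear in the expansion asserted here.

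The first concrete step is to apply the classical identity (Abramowitz--Stegun 15.3.10)
\begin{equation*}
\begin{split}
{}_{2}F_{1}(a,b;a+b+m;z)=&\frac{\Gamma(m)\Gamma(a+b+m)}{\Gamma(a+m)\Gamma(b+m)}\sum_{k=0}^{m-1}\frac{(a)_k(b)_k}{k!\,(1-m)_k}(1-z)^k\\
&-\frac{(-1)^m\Gamma(a+b+m)}{\Gamma(a)\Gamma(b)}\sum_{k=0}^{\infty}\frac{(a+m)_k(b+m)_k}{k!\,(k+m)!}(1-z)^{k+m}\bigl[\log(1-z)-H_k\bigr],
\end{split}
\end{equation*}
with $H_k$ the usual digamma combination, and specialize $a=1$, $b=\tfrac12$, $m=n=\beta-\tfrac12$, $1-z=\eta/(1+\eta)$. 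After multiplying by the prefactor $-\tfrac{2N+\alpha}{2\beta}\sqrt{\eta/(1+\eta)}$ and using $b=C(2N+\alpha)^{1/\beta}$ to convert $(2N+\alpha)\eta^\beta=(-z)^\beta/C^\beta$, the resulting expression for $\mathcal{I}(z)$ splits cleanly into a finite algebraic series of length $\beta-\tfrac12$ and a log series whose dominant contribution is of order $\eta^\beta$.

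The second step isolates the dominant log piece. Only the $k=0$ summand of the log series contributes at leading order, producing $\log(\eta/(1+\eta))=\log\eta-\log(1+\eta)$; the accompanying digamma constants $-\psi(1)-\psi(n+1)+\psi(1+n)+\psi(\tfrac{3}{2}+n)$ collapse to $2\log 2$ via the half-integer digamma identity, so that the combination regroups as $\tfrac{(-1)^{\beta+1/2}(-z)^\beta}{2\pi}\log(4/\eta)$, with the sign $(-1)^{\beta+1/2}=(-1)^n$ inherited directly from the $(-1)^m$ in the connection formula. For the algebraic piece, the finite sum $\sum_{k=0}^{n-1}\frac{(1)_k(\tfrac12)_k}{k!(1-n)_k}(\eta/(1+\eta))^k$ expanded against $(1+\eta)^{-1/2-k}$ produces a double series in $\eta$ that I would regroup by total power $\eta^{r-\beta+1/2}$ for $r=0,\ldots,\beta-\tfrac12$; rewriting binomials through $\gamma_s=\Gamma(s-\tfrac12)/\Gamma(s+1)$, the inner convolution should match the $\delta_{\beta-1/2-r}$ appearing in the lemma.

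The principal technical obstacle is precisely this last combinatorial identification: verifying that the apparent double convolution telescopes into the compact single sum $\delta_k=\sum_{j=1}^{\beta-1/2}\gamma_{j-k}/L_{j-1/2}$ with the weight $L_{j-1/2}$ exactly as defined will require Gauss's duplication and reflection formulae to convert the Pochhammer ratio $(1)_k(\tfrac12)_k/(1-n)_k$ into the form dictated by the definitions of $L_k$, $\delta_k$ and $\gamma_k$, together with careful bookkeeping of the constants coming from the Gamma-function prefactors in the Kummer formula. Once this identity is established, substitution back into the expression for $\mathcal{I}(z)$ yields the claimed asymptotic expansion.
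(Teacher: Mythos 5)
Your route is genuinely different from the paper's. The paper does not invoke the Kummer connection formula at $z=1$ at all: it uses Gauss's contiguous relation (\ref{0.0}) to set up a recursion in the third parameter, which, starting from the elementary closed form of ${_{2}F_{1}}\left(1,\frac{1}{2};\frac{5}{2};z\right)$, yields an \emph{exact} finite expression
\begin{equation*}
{_{2}F_{1}}\left(1,\tfrac{1}{2};\beta+1;z\right)=L_{\beta}\frac{(z-1)^{\beta-\frac{1}{2}}}{z^{\beta+\frac{1}{2}}}\left(\sqrt{z}\log\left[\frac{1+\sqrt{z}}{1-\sqrt{z}}\right]+\sum_{k=1}^{\beta-\frac{1}{2}}\frac{1}{L_{k-\frac{1}{2}}}\left(\frac{z}{z-1}\right)^{k}\right)
\end{equation*}
for half-integer $\beta$. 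Substituting $z=\frac{1}{1+\eta}$ then gives $\mathcal{I}(z)$ exactly as a $\log\left[\frac{\sqrt{1+\eta}+1}{\sqrt{1+\eta}-1}\right]$ term plus $\sqrt{1+\eta}\sum_{k}(-1)^{k}\eta^{-k}/L_{k-\frac{1}{2}}$, and the coefficients $\delta_{k}=\sum_{j}\gamma_{j-k}/L_{j-\frac{1}{2}}$ fall out transparently as the Cauchy product of the binomial series (\ref{0.9}) for $\sqrt{1+\eta}$ with that finite sum; this is precisely why $\delta$ is defined through $L$ and $\gamma$. Your approach buys a standard, citable starting point (the logarithmic case of the connection formula, which indeed also explains the failure of the first representation at $\beta=n+\frac{1}{2}$), but at the price of having to convert the Pochhammer/Gamma data of that formula back into the $L_{k}$, $\gamma_{k}$ bookkeeping.

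That conversion is exactly where your proposal stops, and it is the actual content of the lemma: you yourself label the identification of the double series with $\sum_{k}(-1)^{k}\delta_{\beta-\frac{1}{2}-k}\eta^{k-\beta+\frac{1}{2}}$ as an unresolved ``principal technical obstacle,'' so the proof is incomplete precisely at the step that produces the stated coefficients. Two of the constants you do assert are also not right as written. First, the $n=0$ digamma combination is $-\psi(1)-\psi(m+1)+\psi(1+m)+\psi(\frac{1}{2}+m)=-2\log 2+2\sum_{k=1}^{m}\frac{1}{2k-1}$, not simply ``$2\log 2$'': the $-2\log 2$ does combine with $\log\frac{\eta}{1+\eta}$ to give $-\log\frac{4}{\eta}$ (and the overall prefactor does reduce to $\frac{1}{2\pi}$ via the duplication formula applied to $C^{\beta}$), but the residual harmonic sum survives at order $\eta^{0}$ and must be shown to be accounted for inside $\delta_{0}$; you do not address it. Second, $(-1)^{\beta+\frac{1}{2}}=(-1)^{n+1}$, not $(-1)^{n}$ ``inherited directly from the $(-1)^{m}$'' --- the extra sign comes from $\log(1-z)\simeq-\log\frac{4}{\eta}$, which is a small slip but symptomatic of the unverified bookkeeping. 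As it stands the proposal is a plausible alternative strategy with the decisive combinatorial identity and the constant-term matching left unproved.
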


\begin{proof}
Based on the Gauss' recursion relation \cite{new2}, see (\ref{0.0}) in the Appendix, Chen and Lawrence \cite{C9} built the following version formula:
\begin{equation*}
\begin{split}
{_{2}F_{1}}\left(1,\frac{1}{2};n+\frac{5}{2};z\right)&=\frac{\left(n+\frac{3}{2}\right)(z-1)}{(n+1)z}\left[{_{2}F_{1}}\left(1,\frac{1}{2};n+\frac{3}{2};z\right)
-{_{2}F_{1}}\left(1,\frac{1}{2};n+\frac{1}{2};z\right)\right]\\
&+\frac{n\left(n+\frac{3}{2}\right)}{(n+1)\left(n+\frac{1}{2}\right)}{_{2}F_{1}}\left(1,\frac{1}{2};n+\frac{3}{2};z\right),
\end{split}
\end{equation*}
together with the fact that
\begin{equation*}
{_{2}F_{1}}\left(1,\frac{1}{2};\frac{5}{2};z\right)=\frac{3}{4}\frac{(z-1)}{z^{\frac{3}{2}}}\log\left[\frac{1+\sqrt{z}}{1-\sqrt{z}}\right]+\frac{3}{2}z,
\end{equation*}
we can get
\begin{equation*}
{_{2}F_{1}\left(1,\frac{1}{2};\beta+1;z\right)}=L_{\beta}\frac{(z-1)^{\beta-\frac{1}{2}}}{z^{\beta+\frac{1}{2}}}\left(\sqrt{z}\log\left[\frac{1+\sqrt{z}}{1-\sqrt{z}}\right]+
\sum_{k=1}^{\beta-\frac{1}{2}}\frac{1}{L_{k-\frac{1}{2}}}\left(\frac{z}{z-1}\right)^{k}\right),
\end{equation*}
where $\beta=n+\frac{1}{2},n=1,2,3,\ldots$ and $L_{k}$ is given by
\begin{equation*}
L_{k}:=\frac{k}{\pi}C^{k}(k).
\end{equation*}

\noindent Consequently, we have
\begin{equation*}
\mathcal{I}(z)=\frac{(-1)^{\beta+\frac{1}{2}}}{2\pi}(-z)^{\beta}\left(\log\left[\frac{\sqrt{1+\eta}+1}{\sqrt{1+\eta}-1}\right]+\sqrt{1+\eta}
\sum_{k=1}^{\beta-\frac{1}{2}}(-1)^{k}\frac{\eta^{-k}}{L_{k-\frac{1}{2}}}\right).
\end{equation*}
By using (\ref{0.9}) again, we find
\begin{equation*}
\begin{split}
\mathcal{I}(z)&\simeq\frac{(-1)^{\beta+\frac{1}{2}}(-z)^{\beta}}{2\pi}\log\left[\frac{4}{\eta}\right]+\frac{(-1)^{\beta-\frac{1}{2}}(-z)^{\beta}}{4\pi^{\frac{3}{2}}}
\left[\sum_{j=1}^{\beta-\frac{1}{2}}\frac{(-1)^{j}\eta^{-j}}{L_{j-\frac{1}{2}}}\right]\\
&\cdot\left[\sum_{k=0}^{\beta-\frac{1}{2}}(-1)^{k+j-\left(\beta-\frac{1}{2}\right)}
\frac{\Gamma\left(k+j-\left(\beta-\frac{1}{2}\right)-\frac{1}{2}\right)}{\Gamma\left(k+j-\left(\beta-\frac{1}{2}\right)+1\right)}\eta^{k+j-\left(\beta-\frac{1}{2}\right)}\right]\\
&=\frac{(-1)^{\beta+\frac{1}{2}}(-z)^{\beta}}{2\pi}\log\left[\frac{4}{\eta}\right]+
\left[\sum_{k=0}^{\beta-\frac{1}{2}}(-1)^{k}\sum_{j=1}^{\beta-\frac{1}{2}}
\frac{\frac{\Gamma\left(k+j-\left(\beta-\frac{1}{2}\right)-\frac{1}{2}\right)}{\Gamma\left(k+j-\left(\beta-\frac{1}{2}\right)+1\right)}}{L_{j-\frac{1}{2}}}
\eta^{k+\frac{1}{2}-\beta}\right].\\
\end{split}
\end{equation*}
With an easy simplification, the Lemma is obtained immediately.
\end{proof}

Substituting $\eta=-zC^{-1}(2N+\alpha)^{-\frac{1}{\beta}}$, together with a simple calculation gives the following strong asymptotics of $\mathcal{P}_{N}(z)$ for $z\notin[0,\infty)$,
\begin{lem}\label{lem0.2}
For $z\notin[0,\infty)$, we have
\begin{equation*}
\begin{split}
\mathcal{P}_{N}(z)&\simeq\frac{(-1)^{N}}{\sqrt{2\pi}}(-z)^{-\frac{\alpha}{2}-\frac{1}{4}}\left(C(2N+\alpha)^{\frac{1}{\beta}}\right)^{-\frac{1}{4}}
\exp\left[\frac{(-1)^{\beta-\frac{1}{2}}(-z)^{\beta}}{2\pi}
\log\left(\frac{4C(2N+\alpha)^{\frac{1}{\beta}}}{-z}\right)\right]\\
&\cdot\exp\left[\frac{(2N+\alpha)^{1-\frac{1}{2\beta}}}{2\sqrt{\pi C}}\sum_{k=0}^{\beta-\frac{1}{2}}(-1)^{k}B_{k}\frac{(-z)^{k+\frac{1}{2}}}{\left(C(2N+\alpha)^{\frac{1}{\beta}}\right)^{k}}\right],\\
\end{split}
\end{equation*}
where
\begin{equation*}
B_{k}:=a_{k}-\frac{L_{\beta}}{2\beta}\delta_{\beta-\frac{1}{2}-k},
\end{equation*}
In particularly,
\begin{equation*}
B_{0}=\frac{4\sqrt{\pi}\beta}{2\beta-1}.
\end{equation*}
\end{lem}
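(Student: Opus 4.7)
The plan is to treat Lemma \ref{lem0.2} as a direct substitution exercise combining Lemma \ref{2222} with the half-integer expansion of $\mathcal{I}(z)$ supplied by Lemma \ref{lem0.1}, after converting the variable $\eta$ back to $z$ via $\eta = -z\,C^{-1}(2N+\alpha)^{-1/\beta}$. First I would recall from Lemma \ref{2222} that, for $|\eta|\ll1$,
\begin{equation*}
\mathcal{P}_N(z)\simeq\frac{(-1)^N \eta^{1/4}}{\sqrt{2\pi(-z)^{\alpha+1}}}\exp\!\Bigl[-\mathcal{I}(z)+\tfrac{(-z)^\beta}{2\sqrt{\pi}C^\beta}\sum_{k=0}^{E_\beta}(-1)^k a_k\,\eta^{k-\beta+1/2}\Bigr],
\end{equation*}
noting that for $\beta=n+\tfrac12$ one has $E_\beta=n=\beta-\tfrac12$, so the sum already matches the range $0\le k\le\beta-\tfrac12$ appearing in the target formula.

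Second, I would insert the expansion from Lemma \ref{lem0.1}, which decomposes $\mathcal{I}(z)$ into a single logarithmic term $\tfrac{(-1)^{\beta+1/2}(-z)^\beta}{2\pi}\log[4/\eta]$ plus a finite sum $\tfrac{(-z)^\beta}{4\pi^{3/2}}\sum_{k}(-1)^k\delta_{\beta-1/2-k}\eta^{k-\beta+1/2}$ of the same half-integer powers of $\eta$. Because $\beta-\tfrac12$ is an integer, the sign identity $(-1)^{\beta+1/2}=-(-1)^{\beta-1/2}$ turns the $-\mathcal{I}(z)$ logarithmic contribution into $\tfrac{(-1)^{\beta-1/2}(-z)^\beta}{2\pi}\log\!\bigl[4C(2N+\alpha)^{1/\beta}/(-z)\bigr]$ after the $\eta$-substitution, which is exactly the first exponential in the statement. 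Similarly the prefactor simplifies, using $\eta^{1/4}=(-z)^{1/4}(C(2N+\alpha)^{1/\beta})^{-1/4}$, to the required $\tfrac{(-1)^N}{\sqrt{2\pi}}(-z)^{-\alpha/2-1/4}(C(2N+\alpha)^{1/\beta})^{-1/4}$.

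Third, I would merge the two finite sums. Using the identity $(-z)^{\beta}\eta^{k-\beta+1/2}=(-z)^{k+1/2}C^{\beta-k-1/2}(2N+\alpha)^{1-(2k+1)/(2\beta)}$, every term factors out the common prefactor $\tfrac{(2N+\alpha)^{1-1/(2\beta)}}{2\sqrt{\pi C}}\cdot\tfrac{(-z)^{k+1/2}}{(C(2N+\alpha)^{1/\beta})^k}$ that appears in Lemma \ref{lem0.2}. The residual coefficient then reads $a_k$ (from the Lemma \ref{2222} sum) minus a multiple of $\delta_{\beta-1/2-k}$ coming from $-\mathcal{I}(z)$; after reconciling the normalization constants $\tfrac{1}{2\sqrt{\pi}C^\beta}$ and $\tfrac{1}{4\pi^{3/2}}$ with the factored-out prefactor, the surviving multiple is precisely $\tfrac{L_\beta}{2\beta}$, yielding $B_k=a_k-\tfrac{L_\beta}{2\beta}\delta_{\beta-1/2-k}$. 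The specialization $B_0=\tfrac{4\sqrt{\pi}\beta}{2\beta-1}$ then follows by computing $a_0=2\sqrt{\pi}$ and evaluating $\delta_{\beta-1/2}$ directly from its definition; the agreement with $A_0$ in (\ref{10}) serves as a consistency check across $\beta=n+\tfrac12$.

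The main obstacle is purely bookkeeping: in both Lemma \ref{2222} and Lemma \ref{lem0.1} the finite sums carry different global constants and different powers of $C$, and the whole assertion of Lemma \ref{lem0.2} hinges on the fact that, after the $\eta\mapsto z$ substitution, these constants combine to produce exactly the single normalization $\tfrac{(2N+\alpha)^{1-1/(2\beta)}}{2\sqrt{\pi C}}$ with the stated coefficient $L_\beta/(2\beta)$ in front of $\delta_{\beta-1/2-k}$. The logarithmic piece is a genuinely new feature (absent in the non-half-integer Lemma \ref{300}) and must be kept separate from the power-series sum throughout the calculation; everything else is routine algebra.
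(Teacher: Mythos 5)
Your proposal is correct and follows exactly the route the paper intends: the paper offers no written proof of Lemma \ref{lem0.2} beyond the remark that it follows from substituting $\eta=-zC^{-1}(2N+\alpha)^{-1/\beta}$ into Lemma \ref{2222} combined with Lemma \ref{lem0.1}, and your bookkeeping (the identity $(-z)^{\beta}\eta^{k-\beta+1/2}=C^{\beta-k-1/2}(2N+\alpha)^{1-(2k+1)/(2\beta)}(-z)^{k+1/2}$, the sign flip $(-1)^{\beta+1/2}=-(-1)^{\beta-1/2}$, and the reconciliation of $\frac{C^{\beta}}{4\pi^{3/2}}$ with $\frac{1}{2\sqrt{\pi C}}\cdot\frac{L_{\beta}}{2\beta}$ via $L_{\beta}=\beta C^{\beta}/\pi$) fills in precisely the "simple calculation" the authors omit. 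The verification of $B_{0}$ via $a_{0}=2\sqrt{\pi}$, $\delta_{\beta-1/2}=-2\sqrt{\pi}/L_{\beta-1}$ and $L_{\beta}/L_{\beta-1}=2\beta/(2\beta-1)$ is likewise sound.
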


\begin{thm}\label{th0.3}
For $\lambda_{N}$, we have
\begin{equation}\label{e5}
\begin{split}
\lambda_{N}\simeq&~2^{\frac{5}{2}}\pi^{\frac{5}{4}}C^{-\frac{1}{4}}B_{0}^{\frac{1}{2}}\left(2N+\alpha\right)^{\frac{1}{2}-\frac{1}{4\beta}}
\left(4C\left(2N+\alpha\right)^{\frac{1}{\beta}}\right)^{\frac{(-1)^{\beta+\frac{1}{2}}}{\pi}}\\
&\cdot\exp\left[-\frac{(2N+\alpha)^{1-\frac{1}{2\beta}}}{\sqrt{\pi C}}\sum_{k=0}^{\beta-\frac{1}{2}}(-1)^{k}\frac{B_{k}}{C^{k}}\left(2N+\alpha\right)^{-\frac{k}{\beta}}\right].
\end{split}
\end{equation}
\end{thm}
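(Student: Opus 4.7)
The plan is to mirror the proof of Theorem \ref{th3} line by line, replacing the asymptotic formula for $\mathcal{P}_{N}(z)$ from Lemma \ref{300} by that of Lemma \ref{lem0.2}. The only structural novelty is the additional factor
\[
\exp\!\left[\frac{(-1)^{\beta-\frac{1}{2}}(-z)^{\beta}}{2\pi}\log\!\left(\frac{4C(2N+\alpha)^{1/\beta}}{-z}\right)\right]
\]
produced by the logarithmic singularity of ${_{2}F_{1}}(1,\tfrac{1}{2};\beta+1;\cdot)$ at half-integer $\beta$; the $\sec\pi\beta$ term present in Theorem \ref{th3} is absent here, as it would be singular. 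Setting $z=-{\rm e}^{{\rm i}\theta}$ and forming $\mathcal{P}_{\mu}(-{\rm e}^{{\rm i}\theta})\mathcal{P}_{\nu}(-{\rm e}^{-{\rm i}\theta})$ as in (\ref{neweq1}), I would expand the resulting exponent for $|\theta|\ll 1$. Since $B_{0}=4\sqrt{\pi}\beta/(2\beta-1)>0$, the leading quadratic piece $-\tfrac{B_{0}}{8\sqrt{\pi C}}\bigl[(2\mu+\alpha)^{1-1/(2\beta)}+(2\nu+\alpha)^{1-1/(2\beta)}\bigr]\theta^{2}$ supplies the Gaussian decay that concentrates $\mathcal{K}_{\mu\nu}$ near $\theta=0$.

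Once this localization is in place, the derivation from (\ref{30}) through (\ref{0.6}) goes through verbatim. Restricting $N-\omega N^{1/(2\beta)}\le\mu,\nu\le N$ keeps the imaginary linear-in-$\theta$ pieces bounded, so the $\theta$-integration may be extended to $\mathbb{R}$ and evaluated by the Laplace method; this yields an analogue of (\ref{0.6}) with $A_{k}$ replaced by $B_{k}$, the $\sec\pi\beta$ factor deleted, and an extra contribution $[4C(2\mu+\alpha)^{1/\beta}]^{(-1)^{\beta-1/2}/(2\pi)}\cdot[4C(2\nu+\alpha)^{1/\beta}]^{(-1)^{\beta-1/2}/(2\pi)}$ coming from the logarithmic factor evaluated at $\theta=0$. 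The factorization $\mathcal{K}_{\mu\nu}\simeq(-1)^{\mu+\nu}\mathcal{K}_{\mu\mu}^{1/2}\mathcal{K}_{\nu\nu}^{1/2}$ from (\ref{0.7}) then holds by direct inspection.

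With this factorization in hand, the same choice $\xi_{\nu}=(-1)^{\nu}\sigma\mathcal{K}_{\nu\nu}^{1/2}$ for $\nu\ge N_{0}$ used in Theorem \ref{th3} shows that the bound (\ref{e4}) is asymptotically sharp, giving $\lambda_{N}\simeq 2\pi/\sum_{\nu=0}^{N}\mathcal{K}_{\nu\nu}\simeq 2\pi/\!\int_{0}^{N}\mathcal{K}_{\nu\nu}\,d\nu$. A final Laplace estimate of the form (\ref{laplace2}), with the maximum of the exponent attained at the endpoint $\nu=N$, then delivers (\ref{e5}); the sign flip $(-1)^{\beta-1/2}\mapsto(-1)^{\beta+1/2}$ in the exponent of the logarithmic factor in the statement is explained by the reciprocal in $2\pi/\sum\mathcal{K}_{\nu\nu}$.

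The main obstacle I anticipate is the careful bookkeeping of the logarithmic factor through three successive asymptotic stages (Laplace in $\theta$, diagonal collapse $\mu,\nu\to N$, and Laplace in $\nu$), and in particular checking that the new $\theta$-dependence introduced by $\log(-z)=-{\rm i}\theta$ is of lower order than the $B_{0}$ piece so that it does not shift the Gaussian saddle. Quantitatively the intruder contributes a linear-in-$\theta$ term of order unity, whereas the $B_{0}$ term supplies a quadratic of order $(2N+\alpha)^{1-1/(2\beta)}$; for $\beta>\tfrac{1}{2}$ the saddle therefore survives and the induced correction to the leading asymptotic vanishes in the limit.
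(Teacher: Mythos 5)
Your proposal is correct and follows essentially the same route as the paper: the authors likewise rerun the Section 4 argument with Lemma \ref{lem0.2} in place of Lemma \ref{300}, note that the restriction (\ref{0.5}) keeps both the linear-in-$\theta$ terms and $\log\left[(2\mu+\alpha)/(2\nu+\alpha)\right]$ bounded, obtain the analogue of (\ref{0.6}) carrying the factor $\left(4C(2N+\alpha)^{1/\beta}\right)^{(-1)^{\beta-1/2}/\pi}$, and finish with the factorization (\ref{0.7}) and a Laplace evaluation of $\int_{0}^{N}\mathcal{K}_{\nu\nu}\,d\nu$, the reciprocal producing exactly the sign flip to $(-1)^{\beta+\frac{1}{2}}$ that you describe. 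Your tracking of the logarithmic factor through the three asymptotic stages matches the paper's bookkeeping.
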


\begin{proof}
Since $B_{0}>0$ and by an argument like that in the Section 4, again we find that the dominant contribution to $\mathcal{K}_{\mu\nu}$ is from the arc of the unit circle around $z=-1$. Restricting $\mu,~\nu$ to the same range given by (\ref{0.5}), then $(2\mu+\alpha)^{1-\frac{1}{2\beta}-\frac{k}{\beta}}-(2\nu+\alpha)^{1-\frac{1}{2\beta}-\frac{k}{\beta}}$ and $\log\left[(2\mu+\alpha)/(2\nu+\alpha)\right]$ remain bounded and (\ref{neweq1}) will also be true at here. By the Laplace method, we have
\begin{equation}\label{neweq2}
\mathcal{K}_{\mu\nu}\simeq\int_{-\infty}^{\infty}\mathcal{P}_{\mu}\left(-{\rm e}^{{\rm i}\theta}\right)\mathcal{P}_{\nu}\left(-{\rm e}^{-{\rm i}\theta}\right)d\theta.
\end{equation}

As previously, we expand the exponential in the integrand for $|\theta|\ll1$, reserving terms up to the second order.  We obtain\
\begin{equation*}
\begin{split}
\mathcal{K}_{\mu\nu}&\simeq\frac{(-1)^{\mu+\nu}}{(\pi C)^{\frac{1}{4}}}\sqrt{\frac{2}{B_{0}}}\left(2N+\alpha\right)^{-\frac{1}{2}-\frac{1}{4\beta}}\left(4C\left(2N+\alpha\right)^{\frac{1}{\beta}}\right)^{\frac{(-1)^{\beta-\frac{1}{2}}}{\pi}}\\
&\cdot\exp\left[\frac{1}{2\sqrt{\pi C}}\sum_{k=0}^{\beta-\frac{1}{2}}(-1)^{k}\frac{B_{k}}{C^{k}}\left((2\mu+\alpha)^{1-\frac{1}{2\beta}-\frac{k}{\beta}}+(2\nu+\alpha)^{1-\frac{1}{2\beta}-\frac{k}{\beta}}\right)\right],
\end{split}
\end{equation*}

For large enough $\mu$ and $\nu$, restricted by (\ref{0.5}), again we will have
\begin{equation*}
\mathcal{K}_{\mu\nu}\simeq(-1)^{\mu+\nu}\mathcal{K}_{\mu\mu}^{\frac{1}{2}}\mathcal{K}_{\nu\nu}^{\frac{1}{2}}.
\end{equation*}
As per the discussion in the previous section, it follows that
\begin{equation*}
\lambda_{N}\simeq\frac{2\pi}{\int_{0}^{N}\mathcal{K}_{\nu\nu}d\nu}.
\end{equation*}
Taking an application of the Laplace method and doing the same argument as before, we get the asymptotic behavior for the integration,
\begin{equation}\label{neweq3}
\begin{split}
\int_{0}^{N}\mathcal{K}_{\nu\nu}d\nu\simeq&~2^{-\frac{3}{2}}\pi^{-\frac{1}{4}}C^{\frac{1}{4}}B_{0}^{-\frac{1}{2}}\left(2N+\alpha\right)^{-\frac{1}{2}+\frac{1}{4\beta}}
\left(4C\left(2N+\alpha\right)^{\frac{1}{\beta}}\right)^{\frac{(-1)^{\beta-\frac{1}{2}}}{\pi}}\\
&\cdot\exp\left[\frac{(2N+\alpha)^{1-\frac{1}{2\beta}}}{\sqrt{\pi C}}\sum_{k=0}^{\beta-\frac{1}{2}}(-1)^{k}\frac{B_{k}}{C^{k}}\left(2N+\alpha\right)^{-\frac{k}{\beta}}\right],
\end{split}
\end{equation}
which completes the proof of this theorem.
\end{proof}

\begin{exmp}
If we take $\alpha=-\frac{3}{4}$, $\beta=\frac{3}{2}$, then
\begin{equation*}
\lambda_{N}\simeq2^{\frac{5}{2}+\frac{2}{\pi}}\pi^{\frac{5}{4}}C^{\frac{1}{\pi}-\frac{1}{4}}B_{0}^{\frac{1}{2}}
\left(2N-\frac{3}{4}\right)^{\frac{1}{3}+\frac{2}{3\pi}}
\exp\left[-\frac{\left(2N-\frac{3}{4}\right)^{\frac{2}{3}}}{\sqrt{\pi C}}\left(B_{0}-\frac{B_{1}}{C}\left(2N-\frac{3}{4}\right)^{-\frac{2}{3}}\right)\right]
\end{equation*}
where $C=\left(\frac{\pi}{2}\right)^{\frac{2}{3}}$, $B_{0}=3\sqrt{\pi}$ and $B_{1}=-\frac{\sqrt{\pi}}{6}$.
\end{exmp}

\begin{rem}
Putting $\alpha=0$, Chen and Lawrence's result for $\beta=n+\frac{1}{2},n=1,2,3,\ldots$ is recovered.
\begin{equation*}
\begin{split}
\lambda_{N}\simeq8\pi^{\frac{1}{4}}\widetilde{C}^{-\frac{1}{4}}\widetilde{B}_{0}^{\frac{1}{2}}N^{\frac{1}{2}-\frac{1}{4\beta}}
\left(4\widetilde{C}N^{\frac{1}{\beta}}\right)^{\frac{(-1)^{\beta+\frac{1}{2}}}{\pi}}
\exp\left[-\frac{2N^{1-\frac{1}{2\beta}}}{\sqrt{\pi \widetilde{C}}}\sum_{k=0}^{\beta-\frac{1}{2}}(-1)^{k}\frac{\widetilde{B}_{k}}{\widetilde{C}^{k}}N^{-\frac{k}{\beta}}\right],
\end{split}
\end{equation*}
where
\begin{equation*}
\widetilde{C}=2^{-\frac{1}{\beta}}C~~~{\rm and}~~~\widetilde{B}_{k}=B_{k}.
\end{equation*}
\end{rem}

Comparing (\ref{0.03}) with (\ref{e5}), we note that the essential difference between them is the term $\exp(-\sec\pi\beta)$ becomes $\left[4C\left(2N+\alpha\right)^{\frac{1}{\beta}}\right]^{\frac{(-1)^{\beta+\frac{1}{2}}}{\pi}}$. The alternating behavior of the second term depends on whether $\beta+\frac{1}{2}$ is even or odd. Anyway, $\lambda_{N}\rightarrow0$ as $N\rightarrow\infty$.

On the basis of the standard theory \cite{C16}, the moment problem with respect to $w(x),~x\in[0,\infty)$ is indeterminate if
\begin{equation*}
\int_{0}^{\infty}\frac{\log w(x) }{\sqrt{x}(1+x)}dx>-\infty.
\end{equation*}
For our weight $w(x)=x^{\alpha}{\rm e}^{-x^{\beta}}, x\in[0,\infty)$,
\begin{equation*}
\int_{0}^{\infty}\frac{-\alpha\log x+x^{\beta} }{\sqrt{x}(1+x)}dx=\int_{0}^{\infty}\frac{x^{\beta} }{\sqrt{x}(1+x)}dx=\pi\sec\left(\pi\beta\right),
\end{equation*}
Therefore, $\beta=\frac{1}{2}$ is the critical point at which the moment problem becomes indeterminate. If we assume the approximation of $\mathcal{P}_{N}(z)$ given in (\ref{0.3}) holds, we see that
\begin{equation}\label{0.0000}
\mathcal{P}_{N}(z)\simeq\frac{(-1)^{N}}{\sqrt{2}\pi}(-z)^{-\frac{\alpha}{2}-\frac{1}{4}}\left(2N+\alpha\right)^{-\frac{1}{2}}\exp\left[\frac{\sqrt{-z}}{\pi}\left(\log\left[\frac{2\pi\left(2N+\alpha
\right)}{\sqrt{-z}}\right]+1\right)\right].
\end{equation}

We assume that both $\mu$ and $\nu$ are large, however $\mu-\nu$ is bounded by a constant, and thus the asymptotic expression holds. Again, we see that the main contributions to $\mathcal{K}_{\mu\nu}$ come from the arc of the unit circle around $z=-1$. However, for $|\eta|\ll1$, it follows the behavior of $\mathcal{P}_{N}$ given by (\ref{0.0000}):
\begin{equation*}
\begin{split}
\mathcal{K}_{\mu\nu}&\simeq\int_{-\varepsilon}^{\varepsilon}\mathcal{P}_{\mu}(-{\rm e}^{{\rm i}\theta})\mathcal{P}_{\nu}(-{\rm e}^{-{\rm i}\theta})d\theta\\
&\simeq\frac{(-1)^{\mu+\nu}}{2\pi^{2}}{\rm e}^{\frac{2}{\pi}+\frac{2}{\pi}\log2\pi}(2\mu+\alpha)^{-\frac{1}{2}+\frac{1}{\pi}}(2\nu+\alpha)^{-\frac{1}{2}+\frac{1}{\pi}}
   \\
&~\cdot\int_{-\varepsilon}^{\varepsilon}\left[1+\frac{2-2\log2\pi-\log(2\mu+\alpha)-\log(2\nu+\alpha)}{8\pi}\theta^{2}\right]d\theta.
\end{split}
\end{equation*}
Quite obviously, $|\mathcal{K}_{\mu\nu}|$ decreases as $\mu$ and $\nu$ increase, which invalidates the argument of the previous section. However, it is possible to get
an approximative lower bound for the smallest eigenvalue using (\ref{e4}).

Using the Christoffel-Darboux formula (\cite{zzz}, Theorem 2.2.2), which reads:
\begin{equation*}
\sum_{j=0}^{N-1}\frac{P_{j}(x)P_{j}(y)}{h_{j}}=\frac{P_{N}(x)P_{N-1}(y)-P_{N}(y)P_{N-1}(x)}{h_{N-1}(x-y)},
\end{equation*}
and is valid for monic orthogonal polynomials $\left\{P_{n}(z)\right\}$, where $h_{j}$ is the square of the $L^{2}$ norm of $P_{N}(z)$, and the result presented in \cite{C23} for large $N$ off-diagonal recurrence coefficients, we have
\begin{equation*}
\begin{split}
\sum_{\nu=0}^{N}\mathcal{K}&_{\nu\nu}=\int_{-\pi}^{\pi}\sum_{\nu=0}^{N}\mathcal{P}_{\nu}\left(-{\rm e}^{{\rm i}\theta}\right)\mathcal{P}_{\nu}\left(-{\rm e}^{-{\rm i}\theta}\right)d\theta\\
&\simeq\frac{\pi^{2}\left(2N+\alpha\right)^{2}}{4}\int_{-\pi}^{\pi}\frac{\mathcal{P}_{N}\left(-{\rm e}^{{\rm i}\theta}\right)\mathcal{P}_{N+1}\left(-{\rm e}^{-{\rm i}\theta}\right)-\mathcal{P}_{N}\left(-{\rm e}^{-{\rm i}\theta}\right)\mathcal{P}_{N+1}\left(-{\rm e}^{{\rm i}\theta}\right)}{{\rm e}^{{\rm i}\theta}-{\rm e}^{-{\rm i}\theta}}d\theta.
\end{split}
\end{equation*}
As a result, applying the Laplace method, gives
\begin{equation*}
\sum_{\nu=0}^{N}\mathcal{K}_{\nu\nu}\simeq\frac{\left[2\pi(2N+\alpha){\rm e}\right]^{\frac{2}{\pi}}}{4\sqrt{\log\left[2\pi(2N+\alpha){\rm e}\right]}}.
\end{equation*}
We found the smallest eigenvalue for $\beta=\frac{1}{2}$ decreases algebraically rather than exponentially, since $\lambda_{N}\simeq\frac{2\pi}{\sum_{\nu=0}^{N}\mathcal{K}_{\nu\nu}}$.

\section{Numerical results}
It is well known that Hankel matrices (moment matrices) of this form are extremely ill-conditioned.  This can be observed
directly from the terms on the main diagonal of $\mathcal{H}_N$ as follows.  The condition number, $\kappa(\mathcal{H}_{N})=\frac{\Lambda_{N}}{\lambda_{N}}$
where $\Lambda_N$ and $\lambda_N$ are the largest and smallest eigenvalues of $\mathcal{H}_N$ respectively.   By applying the
Rayleigh quotient, we know $\Lambda_N$ is greater than all elements on main diagonal and $\lambda_N$ is less than
all elements on the main diagonal. Thus, for some constant $c>0$:
\begin{align*}
   \kappa(\mathcal{H}_{N})  \geq\frac{\mu_{2N}}{\mu_0}
                           = \frac{\Gamma(\frac{\alpha+1+2N}{\beta})}{\Gamma(\frac{\alpha+1}{\beta})}
                            > c\ \Gamma\Big(\frac{2N}{\beta}\Big), ~~~\text{if~~$\frac{2N}{\beta}\geq\frac{3}{2}$}.
\end{align*}
We note that the condition number is of order $\Gamma\left(\frac{2N}{\beta}\right)$. Due to the ill-conditioned nature of these matrices, standard eigensolver packages based double
precision floating values can solve only small instances, i.e. $N<20$, before they exhaust the available precision (53 bit in the
mantissa, 11 bits in the exponent).

In \cite{C11}, Emmart, Chen and Weems developed an efficient parallel algorithm based on arbitrary precision arithmetic
and the Secant method that can handle the extreme ill-conditioning and we employ their algorithms here for our numerical computations. We use the numerical results to test the convergence of our asymptotic formulas to the actual smallest
eigenvalues for various $N$ and several values of the parameters $\alpha$ and $\beta$. Even with efficient software, the
computation times for the largest size, $N=1000$, require almost 10 hours of CPU time on a modern Core i7 processor.

\begin{figure}[H]
\centering
\includegraphics[width=1.1\textwidth]{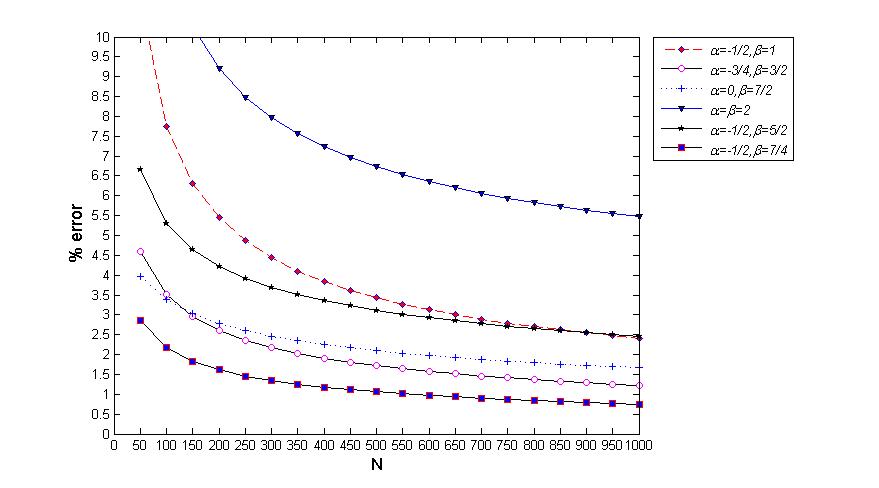}
\caption{The percentage error of the theoretical values of $\lambda_{N}$ vs. those obtained
numerically, for various $\alpha$ and $\beta$. }
\end{figure}

\begin{table}[H]
\centering
\caption{List of numerical results vs. theoretical values, for $\alpha=0,~\beta=\frac{1}{2}$.}
\begin{tabular}{|cllr|cllr|}
\hline
Size $N$      & Numerical $\lambda_{N}$ & Theoretical $\lambda_{N}$    & error     \\  \hline
100           &$0.27397               $ & $0.40360                   $ & $47.32\%$\\
300           &$0.15837               $ & $0.21365                   $ & $34.91\%$\\
500           &$0.12047               $ & $0.15855                   $ & $31.61\%$\\
1000          &$0.082087              $ & $0.10555                   $ & $28.58\%$\\
1500          &$0.065295              $ & $0.083130                  $ & $27.33\%$\\
2000          &$0.055431              $ & $0.070108                  $ & $26.48\%$\\
2500          &$0.048788              $ & $0.061430                  $ & $25.91\%$\\
3000          &$0.043940              $ & $0.055135                  $ & $25.48\%$\\
\hline
\end{tabular}
\end{table}
\begin{rem}\label{rem1}
In Figure 1 and Tables 1-3,
\begin{equation}\label{2.29}
\%~ error:=\left|\frac{{\rm Theoretical}~\lambda_{N}-~{\rm Numerical}~\lambda_{N}}{{\rm Numerical}~\lambda_{N}}\times100\right| .
\end{equation}
\end{rem}

\begin{table}[H]
\centering
\caption{List of numerical results vs. theoretical values, for $\beta\neq n+\frac{1}{2},n\in\mathbb{N}^{+}$}
\begin{tabular}{|cccllr|cccllr|}
\hline
$\alpha$        &$\beta$              &Size $N$      & Numerical $\lambda_{N}$ & Theoretical $\lambda_{N}$    & error     \\  \hline
$ -\frac{1}{2} $&$         1         $&  100         &$2.2434\times10^{-15}     $&$2.4171\times10^{-15}      $& $0.07745$\\
                &                     &  200         &$1.7127\times10^{-22}     $&$1.8059\times10^{-22}      $& $0.05445$\\
                &                     &  300         &$5.7241\times10^{-28}     $&$5.9779\times10^{-28}      $& $0.04435$\\
                &                     &  400         &$1.3647\times10^{-32}     $&$1.4170\times10^{-32}      $& $0.03835$\\
                &                     &  500         &$1.1453\times10^{-36}     $&$1.1845\times10^{-36}      $& $0.03427$\\
                &                     &  600         &$2.3530\times10^{-40}     $&$2.4265\times10^{-40}      $& $0.03126$\\
                &                     &  700         &$9.5340\times10^{-44}     $&$9.8098\times10^{-44}      $& $0.02893$\\
                &                     &  800         &$6.6167\times10^{-47}     $&$6.7956\times10^{-47}      $& $0.02705$\\
                &                     &  900         &$7.1306\times10^{-50}     $&$7.3123\times10^{-50}      $& $0.02549$\\
                &                     & 1000         &$1.1110\times10^{-52}     $&$1.1378\times10^{-52}      $& $0.02417$\\
$      1       $&$        1          $&  100         &$2.0119\times10^{-15}      $&$2.0845\times10^{-15}      $& $0.03610$\\
                &                     &  200         &$1.5845\times10^{-22}     $&$1.6258\times10^{-22}      $& $0.02605$\\
                &                     &  300         &$5.3703\times10^{-28}     $&$5.4855\times10^{-28}      $& $0.02145$\\
                &                     &  400         &$1.2911\times10^{-32}     $&$1.3153\times10^{-32}      $& $0.01867$\\
                &                     &  500         &$1.0898\times10^{-36}     $&$1.1081\times10^{-36}      $& $0.01676$\\
                &                     &  600         &$2.2486\times10^{-40}      $&$2.2831\times10^{-40}     $& $0.01534$\\
                &                     &  700         &$9.1416\times10^{-44}      $&$9.2716\times10^{-44}     $& $0.01423$\\
                &                     &  800         &$6.3614\times10^{-47}       $&$6.4462\times10^{-47}     $&$0.01333$\\
                &                     &  900         &$6.8707\times10^{-50}       $&$6.9572\times10^{-50}     $& $0.01258$\\
                &                     & 1000         &$1.0725\times10^{-52}       $&$1.0853\times10^{-52}     $& $0.01195$\\
$  -\frac{1}{2}$&$  \frac{7}{4}      $&  100         &$2.0753\times10^{-45}     $&$2.1203\times10^{-45}      $& $0.02168$\\
                &                     &  200         &$4.6281\times10^{-76}     $&$4.7027\times10^{-76}      $& $0.01613$\\
                &                     &  300         &$1.7181\times10^{-102}     $&$1.7412\times10^{-102}      $& $0.01344$\\
                &                     &  400         &$1.6945\times10^{-126}     $&$1.7144\times10^{-126}      $& $0.01177$\\
                &                     &  500         &$7.6149\times10^{-149}     $&$7.6955\times10^{-149}      $& $0.01059$\\
                &                     &  600         &$5.9500\times10^{-170}     $&$6.0077\times10^{-170}      $& $0.00970$\\
                &                     &  700         &$4.4336\times10^{-190}     $&$4.4735\times10^{-190}      $& $0.00899$\\
                &                     &  800         &$2.0973\times10^{-209}     $&$2.1149\times10^{-209}      $& $0.00842$\\
                &                     &  900         &$4.7024\times10^{-228}     $&$4.7398\times10^{-228}      $&$0.00794$\\
                &                     & 1000         &$4.0152\times10^{-246}      $&$4.0455\times10^{-246}      $&$0.00753$\\
$      2       $&$        2          $&  100         &$1.5626\times10^{-54}     $&$1.3738\times10^{-54}        $& $0.12082$\\
                &                     &  200         &$4.0862\times10^{-93}     $&$3.7101\times10^{-93}        $& $0.09204$\\
                &                     &  300         &$4.6575\times10^{-127}     $&$4.2866\times10^{-127}      $& $0.07964$\\
                &                     &  400         &$2.7728\times10^{-158}     $&$2.5723\times10^{-158}      $& $0.07230$\\
                &                     &  500         &$1.2618\times10^{-187}     $&$1.1769\times10^{-187}      $& $0.06729$\\
                &                     &  600         &$1.5155\times10^{-215}     $&$1.4191\times10^{-215}      $& $0.06358$\\
                &                     &  700         &$2.4610\times10^{-242}     $&$2.3117\times10^{-242}      $& $0.06067$\\
                &                     &  800         &$3.4223\times10^{-268}     $&$3.2228\times10^{-268}      $& $0.05831$\\
                &                     &  900         &$2.9306\times10^{-293}     $&$2.7654\times10^{-293}      $& $0.05635$\\
                &                     & 1000         &$1.2053\times10^{-317}      $&$1.1394\times10^{-317}      $& $0.05467$\\
\hline
\end{tabular}
\end{table}

\begin{table}[H]
\centering
\caption{List of numerical results vs. theoretical values, for $\beta=n+\frac{1}{2},n\in\mathbb{N}^{+}$}
\begin{tabular}{|cccllr|cccllr|}
\hline
$\alpha$        &$\beta$              &Size $N$      & Numerical $\lambda_{N}$ & Theoretical $\lambda_{N}$    & error     \\  \hline
$ -\frac{3}{4} $&$     \frac{3}{2}   $&  100         &$7.8618\times10^{-36}     $&$8.1371\times10^{-36}      $& $0.03502$\\
                &                     &  200         &$3.3759\times10^{-58}     $&$3.4638\times10^{-58}      $& $0.02605$\\
                &                     &  300         &$5.3913\times10^{-77}     $&$5.5084\times10^{-77}      $& $0.02172$\\
                &                     &  400         &$8.6803\times10^{-94}     $&$8.8455\times10^{-94}      $& $0.01902$\\
                &                     &  500         &$3.1289\times10^{-109}    $&$3.1825\times10^{-109}     $& $0.01713$\\
                &                     &  600         &$1.1225\times10^{-123}     $&$1.1402\times10^{-123}    $& $0.01571$\\
                &                     &  700         &$2.4313\times10^{-137}     $&$2.4668\times10^{-137}    $& $0.01459$\\
                &                     &  800         &$2.2724\times10^{-150}     $&$2.3035\times10^{-150}    $& $0.01368$\\
                &                     &  900         &$7.2147\times10^{-163}     $&$7.3079\times10^{-163}    $& $0.01292$\\
                &                     & 1000         &$6.5115\times10^{-175}     $&$6.5914\times10^{-175}    $& $0.01227$\\
$  -\frac{1}{2}$&$   \frac{5}{2}     $&  100         &$1.7527\times10^{-68}     $&$1.8456\times10^{-68}      $& $0.05303$\\
                &                     &  200         &$1.6233\times10^{-121}     $&$1.6918\times10^{-121}    $& $0.04223$\\
                &                     &  300         &$1.9900\times10^{-169}     $&$2.0635\times10^{-169}    $& $0.03693$\\
                &                     &  400         &$2.8604\times10^{-214}     $&$2.9564\times10^{-214}    $& $0.03356$\\
                &                     &  500         &$5.9391\times10^{-257}     $&$6.1240\times10^{-257}    $& $0.03114$\\
                &                     &  600         &$5.5030\times10^{-298}     $&$5.6642\times10^{-298}    $& $0.02928$\\
                &                     &  700         &$1.0780\times10^{-337}     $&$1.1080\times10^{-337}    $& $0.02779$\\
                &                     &  800         &$2.6691\times10^{-376}     $&$2.7400\times10^{-376}    $& $0.02655$\\
                &                     &  900         &$5.7436\times10^{-414}     $&$5.8901\times10^{-414}     $& $0.02550$\\
                &                     & 1000         &$8.0870\times10^{-451}     $&$8.2859\times10^{-451}     $& $0.02460$\\
$  2           $&$  \frac{5}{2}      $&  100         &$3.5614\times10^{-69}     $&$3.5767\times10^{-69}        $& $0.00428$\\
                &                     &  200         &$3.9629\times10^{-122}     $&$4.0207\times10^{-122}      $& $0.01460$\\
                &                     &  300         &$5.3740\times10^{-170}     $&$5.4660\times10^{-170}      $& $0.01711$\\
                &                     &  400         &$8.2704\times10^{-215}     $&$8.4184\times10^{-215}      $& $0.01790$\\
                &                     &  500         &$1.8070\times10^{-257}     $&$1.8397\times10^{-257}      $& $0.01809$\\
                &                     &  600         &$1.7433\times10^{-298}     $&$1.7748\times10^{-298}      $& $0.01803$\\
                &                     &  700         &$3.5306\times10^{-338}     $&$3.5937\times10^{-338}      $& $0.01787$\\
                &                     &  800         &$8.9908\times10^{-377}     $&$9.1495\times10^{-337}      $& $0.01766$\\
                &                     &  900         &$1.9823\times10^{-414}     $&$2.0168\times10^{-414}     $& $0.01743$\\
                &                     & 1000         &$2.8512\times10^{-451}      $&$2.9002\times10^{-451}     $& $0.01719$\\
$      0       $&$ \frac{7}{2}       $&  100         &$7.0602\times10^{-89}     $&$6.8217\times10^{-89}      $& $0.03379$\\
                &                     &  200         &$9.5989\times10^{-164}     $&$9.3315\times10^{-164}      $& $0.02786$\\
                &                     &  300         &$1.5672\times10^{-233}     $&$1.5286\times10^{-233}      $& $0.02464$\\
                &                     &  400         &$3.4926\times10^{-300}     $&$3.4140\times10^{-300}      $& $0.02251$\\
                &                     &  500         &$1.3836\times10^{-364}     $&$1.3546\times10^{-364}      $& $0.02096$\\
                &                     &  600         &$3.0405\times10^{-427}     $&$2.9804\times10^{-427}      $& $0.01976$\\
                &                     &  700         &$1.7488\times10^{-488}     $&$1.7160\times10^{-488}      $& $0.01879$\\
                &                     &  800         &$1.5603\times10^{-548}     $&$1.5322\times10^{-548}      $& $0.01798$\\
                &                     &  900         &$1.4700\times10^{-607}     $&$1.4446\times10^{-607}      $& $0.01729$\\
                &                     & 1000         &$1.0901\times10^{-665}      $&$1.0719\times10^{-665}      $& $0.01669$\\
\hline
\end{tabular}
\end{table}

\section{Appendix}

The integral identities listed below, which are relevant to our derivation and can be found in \cite{new1}, \cite{new2} and \cite{C21}.

\begin{equation}
\int_{a}^{b}\frac{dx}{\sqrt{(b-x)(x-a)}}=\pi.
\end{equation}
\begin{equation}
\int_{a}^{b}\frac{dx}{(x+t)\sqrt{(b-x)(x-a)}}=
\frac{\pi}{\sqrt{(t+a)(t+b)}}.
\end{equation}
\begin{equation}
\int_{a}^{b}\frac{\log (x+t)}{\sqrt{(b-x)(x-a)}}dx=
2\pi\log\left(\frac{\sqrt{t+a}+\sqrt{t+b}}{2}\right).
\end{equation}
\begin{equation}
\int_{a}^{b}\frac{\log (1-x)}{\sqrt{(b-x)(x-a)}(x+t)}dx=
\pi\frac{\log\left(\frac{(t+1)^{2}-\left(\sqrt{(t+a)(t+b)}-\sqrt{(1-a)(1-b)}\right)^{2}}{\left(\sqrt{t+a}+\sqrt{t+b}\right)^{2}}\right)}{\sqrt{(t+a)(t+b)}}.
\end{equation}

Gauss' recursion functions (\cite{new2}, P1019, 9.137$^{7}$, 1):
\begin{equation}\label{0.0}
\begin{split}
&\gamma\left[\gamma-1-(2\gamma-\alpha-\beta-1)z\right]F(\alpha,\beta;\gamma;z)+(\gamma-\alpha)(\gamma-\beta)zF(\alpha,\beta;\gamma+1;z)\\
&+\gamma(\gamma-1)(z-1)F(\alpha,\beta;\gamma-1;z)=0.
\end{split}
\end{equation}

\section{Acknowledgements}
The financial support of the Macau Science and Technology Development Fund under grant
number FDCT 130/2014/A3 and FDCT 023/2017/A1 are gratefully acknowledged. We would also like
to thank the National Science Foundation (NSF): CCF-1525754 and the University of Macau for generous support: MYRG 2014-00011 FST, MYRG
2014-00004 FST.

\end{document}